\documentclass[11pt,letterpaper]{article}

\usepackage[margin=1in]{geometry}
\usepackage{latexsym,graphicx,amssymb}
\usepackage{amsmath}
\usepackage{amsthm, amsfonts}
\usepackage{mathtools, thmtools}
\usepackage{bbm}
\usepackage{float}
\usepackage{xspace}
\usepackage{paralist}
\usepackage{enumerate,multicol}
\usepackage{cases}
\usepackage{caption}
\usepackage{graphicx}
\usepackage{tikz,pgfmath}
\usetikzlibrary{matrix,positioning,quotes}
\usetikzlibrary{shapes.multipart}
\usetikzlibrary{calc}
\usetikzlibrary{arrows,decorations.markings}
\usetikzlibrary{matrix}
\usepackage{url}
\usepackage[ruled,linesnumbered]{algorithm2e}
\usepackage{cleveref}

\usepackage{ulem}
\normalem

\usepackage[noend]{algpseudocode}
\usepackage{xcolor}
\usepackage{multirow}
\usepackage{graphicx}
\usepackage{subcaption}
\usepackage{varwidth}
\usepackage{wrapfig}
\usepackage{enumitem}
\usepackage{float}
\usepackage{bbm,bm}
\usepackage{tcolorbox}

\numberwithin{figure}{section}
\numberwithin{equation}{section}
\newtheorem{definition}{Definition}[section]

\newtheorem{corollary}{Corollary}[section]

\newtheorem{lemma}{Lemma}[section]
\newtheorem*{lemma*}{Lemma}

\newcommand{\eps}{\varepsilon}
\newcommand{\eqdef}{\overset{\mathrm{def}}{=\mathrel{\mkern-3mu}=}}

\newcommand{\ind}[1]{\mathbbm{1}[ #1  ]} %
\newcommand{\apbalance}{\textup{approximate Perturbed-Balance}\xspace}
\newcommand{\dx}{\textup{d}x}

\newcommand{\bk}[1]{\left({#1}\right)}
\newcommand{\Bk}[1]{\left[{#1}\right]}
\newcommand{\BK}[1]{\left\{{#1}\right\}}
\newcommand{\abs}[1]{\left|{#1}\right|}

\newcommand{\E}{\mathop{\mathbb{E}}}
\renewcommand{\Pr}{\mathop{\mathrm{Pr}}}

\renewcommand{\d}{\mathrm{d}}

\newcommand\numberthis{\addtocounter{equation}{1}\tag{\theequation}}

\newcommand{\opt}{\textup{OPT}}
\newcommand{\alg}{\textup{ALG}}
\newcommand{\rank}{\textup{Ranking}\xspace}
\newcommand{\balance}{\textup{Balance}\xspace}
\newcommand{\prank}{\textup{Perturbed-Ranking}\xspace}
\newcommand{\pbalance}{\textup{Perturbed-Balance}\xspace}

\definecolor{applegreen}{rgb}{0.55, 0.71, 0.0}

\title{On the Perturbation Function of Ranking and Balance for Weighted Online Bipartite Matching}
\author{
 Jingxun Liang \thanks{IIIS, Tsinghua University. Email: \texttt{liangjx20@mails.tsinghua.edu.cn}}
 \and
 Zhihao Gavin Tang\thanks{ITCS, Shanghai University of Finance and Economics. Email: \texttt{tang.zhihao@mail.shufe.edu.cn}}
 \and
 Yixuan Even Xu \thanks{IIIS, Tsinghua University. Email: \texttt{xuyx20@mails.tsinghua.edu.cn}}
 \and
 Yuhao Zhang\thanks{Shanghai Jiao Tong University. Email: \texttt{zhang\_yuhao@sjtu.edu.cn}}
 \and 
 Renfei Zhou \thanks{IIIS, Tsinghua University. Email: \texttt{zhourf20@mails.tsinghua.edu.cn}}}
 
\date{}

\begin{document}

\maketitle
\thispagestyle{empty}

\begin{abstract}
Ranking and Balance are arguably the two most important algorithms in the online matching literature. They achieve the same optimal competitive ratio of $1-1/e$ for the integral version and fractional version of online bipartite matching by Karp, Vazirani, and Vazirani (STOC 1990) respectively. 
The two algorithms have been generalized to weighted online bipartite matching problems, including vertex-weighted online bipartite matching and AdWords, by utilizing a perturbation function. 
The canonical choice of the perturbation function is $f(x)=1-e^{x-1}$ as it leads to the optimal competitive ratio of $1-1/e$ in both settings.

We advance the understanding of the weighted generalizations of Ranking and Balance in this paper, with a focus on studying the effect of different perturbation functions. First, we prove that the canonical perturbation function is the \emph{unique} optimal perturbation function for vertex-weighted online bipartite matching. In stark contrast, all perturbation functions achieve the optimal competitive ratio of $1-1/e$ in the unweighted setting. 
Second, we prove that the generalization of Ranking to AdWords with unknown budgets using the canonical perturbation function is at most $0.624$ competitive, refuting a conjecture of Vazirani (2021). More generally, as an application of the first result, we prove that no perturbation function leads to the prominent competitive ratio of $1-1/e$ by establishing an upper bound of $1-1/e-0.0003$. 
Finally, we propose the online budget-additive welfare maximization problem that is intermediate between AdWords and AdWords with unknown budgets, and we design an optimal $1-1/e$ competitive algorithm by generalizing Balance.

\end{abstract}

\clearpage
\setcounter{page}{1}

\section{Introduction}
\label{sec:intro}

Online bipartite matching has been extensively studied since the seminal work of Karp, Vazirani, and Vazirani~\cite{stoc/KarpVV90}. Two remarkable algorithms, \rank of Karp et al.~\cite{stoc/KarpVV90} and \balance of Kalyanasundaram and Pruhs~\cite{tcs/KalyanasundaramP00}, achieve the same optimal competitive ratio of $1-1/e$ for the integral (randomized) and fractional (deterministic) version of the problem respectively.

\subparagraph*{AdWords and Vertex-weighted.} 
Motivated by the display of advertising on the Internet, Mehta et al.~\cite{jacm/MehtaSVV07} generalized the online bipartite matching problem so that it allows weighted graphs.
Consider an underlying bipartite graph $G = (L\cup R, E)$ with $L,R$ being offline and online vertices. Each vertex $u\in L$ is associated with a budget $B_u$, and each edge $(u,v)\in E$ is associated with a bid $w_{uv}$. The offline vertices and their corresponding budgets are known in advance. 
The online vertices arrive one at a time, with their incident edges and associated bids, and have to be matched immediately and irrevocably to some $u\in L$. An offline vertex $u\in L$ can be matched to multiple online vertices. Let $S_u$ be the set of online vertices matched to $u$ and then, the revenue of $u$ equals $\min\{B_u,\sum_{v \in S_u} w_{uv}\}$, that is, the revenue cannot exceed the budget. 
The goal is to maximize the total revenue and to compete against the optimal revenue of an offline algorithm that knows the whole graph.

Mehta et al.~\cite{jacm/MehtaSVV07} established an optimal $\left(1-1/e\right)$-competitive algorithm for the fractional version\footnote{The paper states their result with the small-bid assumption, i.e., $\gamma = \max_{u,v} w_{uv}/B_u$ is small. We consider the fractional AdWords problem, corresponding to the case when $\gamma \to 0$. See our discussion in Section~\ref{sec:prelim}.} of the problem by generalizing \balance to \pbalance. 
Later, Aggarwal et al.~\cite{soda/AggarwalGKM11} considered a restricted setting of AdWords, called vertex-weighted online bipartite matching, in which all edges incident to $u$ have the same weight of $w_u = B_u$. They generalized \rank to \prank and obtained the same $1-1/e$ competitive ratio for the integral version of the problem.

The two generalizations are both greedy-based algorithms with a careful perturbation of the weights. Specifically, upon the arrival of each vertex $v$, the algorithm matches it with the offline vertex $u$ with the maximum perturbed weight $f(x_u) \cdot w_{uv}$, among those neighbors whose budgets have not yet been exhausted. Here, $x_u$ corresponds to the random rank of $u$ in \prank and the fraction of budget spent so far in \pbalance. 
The canonical choice of the perturbation function is $f(x) = 1-e^{x-1}$, applied by Mehta et al.~\cite{jacm/MehtaSVV07} and Aggarwal et al.~\cite{soda/AggarwalGKM11}.
Notably, Devanur, Jain, and Kleinberg~\cite{soda/DevanurJK13} provided a unified primal-dual analysis for \prank and \pbalance, in which the perturbation function plays a critical role even for the unweighted online bipartite matching problem.

Despite the above successful generalizations of \rank and \balance from unweighted to weighted graphs, we lack an understanding of the extra difficulty introduced by weighted graphs.
In this paper, we revisit the two classic algorithms and focus on the perturbation function.
Notice that for unweighted graphs, \prank (resp. \pbalance) with an arbitrary perturbation function degenerates to the same \rank (resp. \balance) algorithm and achieves the optimal competitive ratio of $1-1/e$. 
We examine the importance of perturbation functions by studying the performance of \prank and \pbalance on weighted graphs with an arbitrary perturbation function.

Our first result confirms the importance of the perturbation function, proving that the canonical perturbation function $f(x)=1-e^{x-1}$ is the unique optimal perturbation function for vertex-weighted online bipartite matching.

\begin{tcolorbox}[frame empty]
\begin{restatable}{theorem}{thmvertexweight}
    \label{thm:vertexweight}
	The perturbation function $f(x) = 1-e^{x-1}$ is unique (up to a scale factor) for \prank and \pbalance to achieve the optimal competitive ratio of $1-1/e$ for vertex-weighted online bipartite matching. 
\end{restatable}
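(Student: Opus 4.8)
The plan is to prove two directions. For the positive direction — that $f(x)=1-e^{x-1}$ achieves $1-1/e$ — I would invoke the known primal-dual analysis of Devanur, Jain, and Kleinberg, so the content of the theorem lies entirely in the \emph{uniqueness} (optimality) direction: any perturbation function $f$ achieving competitive ratio $1-1/e$ must equal $1-e^{x-1}$ up to a positive scalar. Note first that the algorithm is scale-invariant in $f$ (multiplying $f$ by a constant does not change the argmax), so WLOG I may normalize, say $f(1)=0$ and $f(0)=1-1/e$ or similar, and I should also record basic structural constraints a reasonable $f$ must satisfy: $f$ should be nonnegative and nondecreasing on $[0,1]$ (otherwise even trivial instances fail), and — crucially for \pbalance, where $x_u$ is the fraction of budget spent — we may also assume $f$ is continuous, since the relevant quantity is $\int f$.

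The heart of the argument is to construct, for \emph{any} admissible $f$ other than the canonical one, a family of vertex-weighted instances on which \prank/\pbalance performs strictly worse than $1-1/e$. I would use the classical "upper-triangular" hard instance, appropriately weighted. Consider $n$ offline vertices with weights $w_1 \ge w_2 \ge \cdots \ge w_n$ chosen adversarially, and $n$ online vertices arriving in rounds where online vertex $t$ is adjacent to offline vertices $t, t+1, \dots, n$ (the standard KVV instance). In \pbalance, tracking the fraction of budget each offline vertex has spent gives a clean differential/integral description: if $g(x)$ denotes (in the continuous limit) the "water level" profile, the algorithm's behavior and hence its total revenue is governed by an integral equation involving $f$. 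The optimal offline solution has value equal to the sum of all weights (each offline vertex fully used), so the competitive ratio on this instance becomes a ratio of an $f$-dependent integral to $\sum w_i$. The canonical $f=1-e^{x-1}$ is precisely the fixed point that balances this ratio at $1-1/e$ uniformly across the weight profile; for any other $f$ there is a choice of weights $w_1,\dots,w_n$ (concentrating mass where $f$ deviates from $1-e^{x-1}$, either above or below) that pushes the ratio strictly below $1-1/e$. Formally I would set up the worst-case weight distribution as the solution of an optimization (an LP or a simple variational problem) over weight profiles for a fixed $f$, show its value is $1-1/e$ iff $f$ satisfies the ODE $f'(x) = 1-f(x)$ (equivalently $f(x)=1-ce^{-x}$, and the normalization $f(1)=0$ forces $c=e$), and conclude.

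Concretely, the key steps in order: (1) reduce to the uniqueness direction and fix a normalization using scale-invariance; (2) establish monotonicity/continuity constraints on any competitive $f$ via simple instances; (3) analyze \pbalance on the weighted upper-triangular instance, deriving the integral equation for the algorithm's revenue as a functional of $f$ and the weight profile $w$; (4) take the worst case over weight profiles and show the resulting competitive ratio equals $1-1/e$ if and only if $f(x) = 1-e^{x-1}$ (up to scaling), with a strict gap otherwise; (5) transfer the \pbalance lower-bound instances to \prank (or argue directly), using that the fractional hard instance can be realized integrally via the randomized-rank correspondence, so the same bound holds for \prank.

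The main obstacle I anticipate is step (4): turning "any deviation of $f$ from the canonical function costs you" into a rigorous strict inequality. The subtlety is that a perturbation of $f$ on a small interval only affects the behavior on a correspondingly thin slice of the instance, so the loss is first-order small and one must check it is genuinely negative (not merely nonpositive) and cannot be compensated elsewhere — this requires carefully choosing the weight profile as a function of where and how $f$ deviates, and handling the two cases $f(x) > 1-e^{x-1}$ and $f(x) < 1-e^{x-1}$ separately (they are exploited by "front-loading" versus "back-loading" the weights). A secondary technical point is justifying the continuous limit (passing from $n$ discrete offline vertices to the integral equation) with uniform error control, and ensuring the instances used are valid for \prank, where $x_u$ is a random rank rather than a deterministic fill level; I would address this by discretizing the rank space and appealing to the standard equivalence used by Aggarwal et al.\ and Devanur et al.
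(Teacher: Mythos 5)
Your high-level architecture matches the paper's: reduce \prank to \pbalance via a duplication-and-concentration argument (this is exactly the paper's \Cref{thm:rankingbalance}), build weighted hard instances for \pbalance, extract a functional constraint on $f$, and conclude uniqueness. However, the core of the proof is missing, and your step (4) --- which you yourself flag as the main obstacle --- is precisely where the paper's key idea lives. A single weighted upper-triangular instance with an adversarial weight profile is not shown (and is unlikely) to pin down $f$: the paper instead uses a multi-stage gadget (\Cref{lem:vertex_weight}) with two blocks of offline vertices whose weights are chosen to be \emph{values of $f$ itself} (namely $f(\tfrac{i}{m}\beta)$ and $f(\alpha)$), so that in the middle stage the perturbed weights on the two sides become $f(\alpha)\cdot f(\tfrac{i}{m}\beta)$ versus $f(\tfrac{i}{m}\beta)\cdot f(\alpha)$ and the routing is forced by commutativity. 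Two independent triangles, parameterized by $\alpha$ and $\beta$, then yield the two-parameter inequality \eqref{eq:variables_separable}, which with $\Gamma=1-1/e$ separates into a pointwise lower bound $f(\alpha)\ge M(1-e^{\alpha-1})$ and an integral upper bound $\int_0^\beta f(x)\,\mathrm{d}x\le M(\beta-e^{\beta-1}+e^{-1})$ for $M=\inf_\alpha f(\alpha)/(1-e^{\alpha-1})$; integrating the first and comparing with the second forces equality. Nothing in your proposal produces a two-sided constraint of this kind, and "concentrating mass where $f$ deviates" plus checking that the loss "cannot be compensated elsewhere" is exactly the statement to be proved, not an argument for it.

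Two further concrete problems. First, characterizing the extremal $f$ by an ODE presupposes differentiability, and your "we may also assume $f$ is continuous" is unjustified: a perturbation function is only assumed non-increasing and right-continuous, and the uniqueness claim must also exclude non-smooth candidates --- the paper's squeeze argument handles this (with a short extra step using right-continuity and the value at $x=1$), whereas a variational/ODE argument does not. Second, the ODE you write, $f'=1-f$ with solutions $1-ce^{-x}$, is not satisfied by $1-e^{x-1}$ (the correct relation is $f'=f-1$), and $f$ must be non-\emph{increasing}, not nondecreasing; these slips suggest the extremal computation has not actually been carried out.
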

\end{tcolorbox}
It is surprising that this question has been overlooked by the online matching community. We introduce a new family of hard instances that heavily exploits the power of weighted graphs. Noticeably, prior to our work, the only $1-1/e$ impossibility result by Karp et al.~\cite{stoc/KarpVV90} is established for unweighted graphs.
This advanced understanding is also the starting point for us to explore the limitation of \prank in a more general model, i.e., AdWords with unknown budgets.

\subparagraph*{AdWords with Unknown Budgets.} 
A major open question left by Mehta et al.~\cite{jacm/MehtaSVV07} is the competitive ratio of \prank for the fractional AdWords problem.
In addition to obvious theoretical interests, the \prank algorithm has a merit of budget-obliviousness, as pointed out by Vazirani~\cite{corr/Vazirani21} and Udwani~\cite{corr/Udwani21}. I.e., the algorithm does not need to know the budget of each vertex, in contrast to the \pbalance algorithm.
Formally, consider the setting of AdWords with unknown budgets: the algorithm has no prior knowledge of the budgets and only learns the budget of each vertex $u$ when the total revenue of $u$ first exceeds its budget. Observe that \pbalance is not applicable in this setting, since its decision at each step depends on the fraction of budget spent on each offline vertex.

\prank is the only known algorithm for AdWords with unknown budgets so far. Using the canonical perturbation function $f(x)=1-e^{x-1}$, Vazirani~\cite{corr/Vazirani21} proved \prank is $\left(1-1/e\right)$-competitive assuming a no-surpassing property. Udwani~\cite{corr/Udwani21} proved that the algorithm is $0.508$-competitive in the general case and is $0.522$-competitive with a different perturbation function $f(x) = 1 - e^{1.15(x-1)}$.

It is natural to ask if other perturbation functions can lead to a better competitive ratio, or even $1 - 1/e$. In this paper, we give a limitation of \emph{all} perturbation functions, showing a separation between vertex-weighted online bipartite matching and AdWords on the performance of \prank.

We first show that \prank with the canonical perturbation function can only achieve a competitive ratio of at most $0.624 < 1-1/e$. %
Then, together with the family of instances we constructed in the proof of \Cref{thm:vertexweight}, we manage to prove that any perturbation function cannot lead to the prominent competitive ratio of $1-1/e$:

\begin{tcolorbox}[frame empty]
\begin{restatable}{theorem}{thmgeneral}
\label{thm:general}
The competitive ratio of \prank algorithm with any perturbation function $f(x)$ on fractional AdWords is at most $1-1/e-0.0003$. In particular, using the canonical function $f(x) = 1-e^{x-1}$, the competitive ratio is at most $0.624$.
\end{restatable}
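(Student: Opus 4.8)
The plan is to prove the two parts with different constructions, though both leverage the key structural insight from \Cref{thm:vertexweight}: that any perturbation function other than the canonical one already fails on vertex-weighted instances, so the only candidate function for achieving $1-1/e$ on AdWords is $f(x)=1-e^{x-1}$ itself. Thus it suffices to (a) exhibit a single fractional AdWords instance on which \prank with the canonical $f$ achieves ratio at most $0.624$, and (b) argue that every \emph{other} perturbation function is bounded away from $1-1/e$ by a uniform amount on the family of hard instances from the proof of \Cref{thm:vertexweight}, then combine (a) and (b) to get the $1-1/e-0.0003$ bound. The delicate bookkeeping is in making (b) \emph{uniform}: \Cref{thm:vertexweight} as stated only gives strict suboptimality for each non-canonical $f$, so I need a compactness or quantitative-continuity argument to turn that into a single constant gap.

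For part (a), I would construct an explicit AdWords instance — the natural candidate is an "upper-triangular" type instance with a block of offline vertices having geometrically scaled budgets and bids, and online vertices arriving in phases, engineered so that the canonical perturbed-weight comparison $f(x_u)w_{uv}$ causes \prank to over-commit budget to the "wrong" offline vertices early. Concretely, I would track the trajectory $x_u(t)$ of spent-budget fractions as a system of ODEs/recurrences (this is the standard continuous analysis for fractional AdWords), solve for \prank's revenue in closed form, compare against the offline optimum (which can saturate all budgets), and numerically verify the ratio dips below $0.624$. The point of departure from vertex-weighted is that in AdWords a single offline vertex can receive bids of wildly different sizes, so its perturbation "clock" $x_u$ advances at a rate decoupled from how much true value it has absorbed — this mismatch is exactly what the instance should exploit, and is why the vertex-weighted optimality proof of $f$ does not extend.

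For part (b), I would take the family of vertex-weighted hard instances $\{I_\theta\}$ underlying \Cref{thm:vertexweight} (these are special cases of fractional AdWords, so \prank is well-defined on them for any $f$), and express the competitive ratio of \prank-with-$f$ on $I_\theta$ as a functional $\Phi(f,\theta)$ that depends on $f$ only through finitely many "evaluation" quantities — integrals of $f$ against fixed kernels, or the inverse $f^{-1}$ at finitely many thresholds determined by the budget ratios. After normalizing $f$ by its scale factor (WLOG $f(1)=$ const or $\int f =$ const) and using monotonicity/boundedness of admissible perturbation functions, the set of these evaluation vectors is compact; \Cref{thm:vertexweight} says the canonical $f$ is the unique maximizer of $\sup_\theta \Phi(f,\theta)$ and attains exactly $1-1/e$, so by compactness any $f$ at distance $\geq \delta$ from canonical has $\sup_\theta \Phi(f,\theta) \leq 1-1/e-c(\delta)$. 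Combining with part (a) for $f$ within $\delta$ of canonical (where the explicit instance from (a), or a small perturbation of it, still gives a bound below $1-1/e$ by continuity in $f$), a single universal constant like $0.0003$ falls out after optimizing the split point $\delta$.

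The main obstacle I anticipate is the uniformity in part (b): \Cref{thm:vertexweight} is stated as a qualitative uniqueness result, and extracting a \emph{quantitative} gap requires either (i) reopening its proof to see that the loss is controlled by an explicit modulus — e.g. showing $\sup_\theta \Phi(f,\theta) \geq 1-1/e-c\,\|f-f_{\mathrm{can}}\|^2$ for some norm — or (ii) a clean compactness argument on the space of admissible perturbation functions (monotone, bounded, appropriately normalized), which is not hard in principle but needs the functional $\Phi$ to depend continuously on $f$ in a topology in which the admissible set is compact. I would aim for route (i) if the proof of \Cref{thm:vertexweight} already isolates a specific "wrong" instance whose loss scales smoothly with the deviation of $f$, since that also makes the final constant $0.0003$ traceable rather than merely existential; otherwise route (ii) with a Helly/Arzelà–Ascoli-type argument on monotone functions. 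A secondary, purely computational obstacle is verifying the $0.624$ bound in part (a) rigorously rather than numerically — this likely needs interval arithmetic or an explicitly-bounded remainder in the ODE solution, but is routine once the instance is fixed.
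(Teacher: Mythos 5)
Your high-level architecture matches the paper's: an explicit AdWords instance for the canonical $f$ giving $0.624$, plus the vertex-weighted constraints of \Cref{thm:vertexweight} to handle general $f$, combined via a case split on how far $f$ is from canonical. But the combination step --- which is where essentially all of the technical content lives --- has two genuine gaps. First, your treatment of $f$ ``close to canonical'' by continuity does not work as stated: the hard instance's bids are themselves defined in terms of the perturbation function ($b_i = f(i/n)/f(\alpha)$), and the key step bounding the number of online vertices routed to $u_0$ when $y_0<\alpha$ relies on the log-concavity identity $f(0)f(x)\le f(\alpha)f(x-\alpha)$, which holds for $1-e^{x-1}$ but fails for general $f$ (in particular for $f$ with a steep drop near $0$, the naive fix of shrinking the $b_i$ destroys the bound entirely). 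The paper resolves this not by continuity but by a genuinely modified two-parameter construction (\Cref{lem:adwords_general}): the bids are capped by $\min\{f(x)/f(\alpha),\,f(x-\alpha)/f(\beta)\}$ and the randomness of $y_0$ is split into three cases at the thresholds $\beta$ and $\alpha$, yielding a weaker but $f$-free inequality valid for every perturbation function.

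Second, your compactness route for $f$ ``far from canonical'' is exactly the uniformity problem you flag as the main obstacle, and it is not resolved by Helly/Arzel\`a--Ascoli alone: you would need upper semicontinuity of the worst-case ratio in $f$, which is delicate because the algorithm's trajectory depends on argmax comparisons that can flip under small perturbations of $f$. The paper avoids compactness entirely. It keeps the vertex-weighted constraint as the explicit two-parameter inequality \eqref{eq:variables_separable} and shows (\Cref{lem:mathematical_fact}) that if $\Gamma = 1-1/e-0.0003$, this inequality still forces the pointwise lower bound $f(\alpha)\ge M(1-e^{\alpha}(1-\Gamma))$ together with an integral upper bound, which when substituted into the AdWords inequality \eqref{eq:adwords_general} (with a carefully chosen $\beta=\beta(\alpha)$ and explicit parameters $\delta=0.05$, $\beta^*\approx 0.0096$) produces a numerical contradiction. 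Without an argument of this quantitative type --- or a worked-out modulus $\sup_\theta\Phi(f,\theta)\le 1-1/e-c\|f-f_{\mathrm{can}}\|^2$, which you mention but do not supply --- the constant $0.0003$ does not ``fall out''; it is the output of a concrete joint-infeasibility computation. A minor additional point: your stated mechanism for part (a) (the spent-budget clock $x_u$ advancing out of sync with absorbed value) describes \pbalance rather than \prank; the paper's instance instead exploits an offline vertex of unlimited budget whose random \emph{rank} determines whether the unit-budget vertices are wasted on the wrong online arrivals.
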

\end{tcolorbox}
Our result refutes the conjecture of Vazirani~\cite{corr/Vazirani21} that \prank is $1-1/e$ competitive. Moreover, our construction is clean and simple, suggesting that the no-surpassing assumption might be too strong to hold in reality. Such result leads to the conjecture that there is no $1 - 1/e$ competitive algorithm for AdWords with unknown budgets.

\subparagraph*{Remark} Very recently, independently by our work, Udwani~\cite{corr/Udwani21} updated his paper and proved that a specific perturbation function family $f(x) = 1 - e^{\beta (x - 1)}$ is at most $0.624$-competitive for any $\beta > 0$. Our result provides a stronger observation by another approach that shows \emph{all} perturbation functions cannot achieve the competitive ratio of $1-1/e$.

\subparagraph*{Online Budget-additive Welfare Maximization.}
The above upper bound of the competitive ratio for \prank suggests that AdWords with unknown budgets should be strictly harder than AdWords, in terms of the worse-case competitive ratio. Unfortunately, our current construction is specific to the \prank algorithm and does not serve as a problem hardness.

Inspired by the online submodular welfare maximization problem (that we discuss below in the related work section), we consider a variant which we call online budget-additive welfare maximization problem, that lies in between the original AdWords and AdWords with unknown budgets.
Specifically, we assume that 1) the algorithm has no information of the budgets at the beginning, 2) at each step, the algorithm can query for each vertex $u$, the value of $w_u(S) = \min\{B_u, \sum_{v \in S} w_{uv}\}$ for any subset $S$ of arrived vertices. Notice that AdWords with unknown budgets can be interpreted in a similar way, except that the algorithm can only query those sets $S$ that are subsets of $S(u)\cup \{v\}$ where $S(u)$ is the set of current matched vertices to $u$.

Our final result is an optimal algorithm for the fractional version of the problem. 
We hope it would shed some light on designing online algorithms beyond \prank in the AdWords with unknown budgets setting and designing algorithms beyond greedy (with unrestricted computational power) in the online submodular welfare maximization setting. 

\begin{tcolorbox}[frame empty]
\begin{restatable}{theorem}{thmbudgetadditive}
	\label{thm:budgetadditive}
	There exists a fractional algorithm that achieves the competitive ratio of $(1-1/e)$ for the Online Budget-Additive Welfare Maximization problem.
\end{restatable}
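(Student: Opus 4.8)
The plan is to generalize \balance (equivalently \pbalance) to this setting and to analyze it with the primal--dual framework of Devanur, Jain, and Kleinberg~\cite{soda/DevanurJK13}, using the $\min$-value oracle to compensate for the missing budgets. \emph{The algorithm} is a continuous water-filling \balance with the canonical perturbation $f(x)=1-e^{x-1}$. For each offline vertex $u$ maintain $x_u\in[0,1]$, the fraction of $u$'s value capacity currently used ($x_u=V_u/B_u$, with $V_u$ the value accrued at $u$ so far). When the online vertex $v$ arrives, pour out its one unit of mass infinitesimally, always routing the next piece to the offline vertex $u$ maximizing the perturbed marginal $f(x_u)\cdot m_u$, where $m_u$ is the marginal value of routing slightly more of $v$ to $u$; as mass flows in, $x_u$ rises and $f(x_u)$ falls, so the piece water-fills over the current maximizers until all of $v$ is placed or every neighbour has $f(x_u)=0$. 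The point is that everything this rule needs --- $V_u$, the marginals $m_u$, and whether $u$ is saturated --- is readable off $w_u(\cdot)$ restricted to the arrived vertices, even though the bids $w_{uv}$ and budgets $B_u$ are never disclosed outright.

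\emph{The analysis} follows the template of~\cite{soda/DevanurJK13}. Use the capped welfare LP, $\max\sum_u z_u$ subject to $z_u\le B_u$, $z_u\le\sum_v w_{uv}y_{uv}$, $\sum_u y_{uv}\le 1$, $y,z\ge 0$, with dual variables $\gamma_u$ (for $z_u\le B_u$), $\beta_u$ (for $z_u\le\sum_v w_{uv}y_{uv}$) and $\alpha_v$ (for $\sum_u y_{uv}\le 1$), whose constraints are $\gamma_u+\beta_u\ge 1$ and $\alpha_v\ge w_{uv}\beta_u$. At the final levels put $\beta_u:=f(x_u)$ and $\gamma_u:=e^{x_u-1}-e^{-1}$ (both in $[0,1-1/e]$), and let $\alpha_v:=\sum_{\text{pours of }v}f(x_u)\,\d P$, the area under $v$'s perturbed-marginal curve as a function of the mass poured (so $\alpha_v$ is at least the final water level of $v$, the curve being non-increasing). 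Two checks finish it. (i) The dual objective stays equal to $\alg$: whenever an infinitesimal piece of $v$ is poured into an unsaturated $u$, raising the primal by $\d P=B_u\,\d x_u$, the dual rises by $\d\alpha_v+\d(B_u\gamma_u)=f(x_u)\,\d P+(1-f(x_u))\,\d P=\d P$, using $\gamma_u'=1-f$; summing, $\sum_u B_u\gamma_u$ cancels the exponential terms of $\sum_v\alpha_v=\sum_u B_u\!\int_0^{x_u}f(t)\,\d t$, leaving $\sum_u B_u x_u=\alg$. (ii) Scaling the dual by $\tfrac{1}{1-1/e}$ makes it feasible: $\gamma_u+\beta_u=1-1/e$ by the identity $f(x)+(e^{x-1}-e^{-1})=1-1/e$, and $\alpha_v\ge w_{uv}\beta_u$ by a short case analysis on the neighbour $u$ of $v$ (received some of $v$ / received none / already saturated), using the greedy rule and monotonicity of $x_u$ and of $f$. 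Weak duality then gives $\opt\le\tfrac{1}{1-1/e}\alg$, i.e.\ $\alg\ge(1-1/e)\opt$; $1-1/e$ is optimal since the problem generalizes online bipartite matching~\cite{stoc/KarpVV90}.

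\emph{The main obstacle} is not this primal--dual bookkeeping but making the water-filling runnable from the $\min$-oracle alone, especially when some bid exceeds its budget. If $w_{uv}>B_u$ then $w_{uv}$ is never disclosed --- every query touching $v$ returns only $B_u$ --- so the marginal driving the water-filling of a single large arrival, and the exact mass at which that arrival saturates $u$, are not directly observable; I would argue that the process can instead be driven by the observable marginal $w_u(S\cup\{v\})-w_u(S)$ (in the fractional model $w_u(\vec y+t\,\vec e_v)-w_u(\vec y)$), and that this substitution changes neither the perturbed-marginal maximizer in a way that breaks $\alpha_v\ge w_{uv}\beta_u$ nor lets the algorithm over-commit to a $u$ whose budget is only revealed later. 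One must also check that, since $w_u$ is a genuine $\min$-cap rather than a linear function, the within-arrival water-filling is well defined and the accounting $\d P=B_u\,\d x_u$ survives the cap --- which it does, as the greedy rule never pours into a saturated vertex. I expect this ``implementability under the weak oracle'' to be the delicate part; the inequality for $f$ itself should carry over verbatim once the instance is suitably normalized.
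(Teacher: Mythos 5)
There is a genuine gap, and it sits exactly where the whole difficulty of this problem lies: your algorithm is not implementable with the information the model provides. The water-filling rule you propose routes mass to the neighbour maximizing $f(x_u)\cdot m_u$ with $x_u=V_u/B_u$, so evaluating the perturbation requires knowing $B_u$. But as long as the total bid mass of the \emph{arrived} vertices has not yet exceeded $B_u$, every oracle query returns the plain sum $\sum_{v\in S}w_{uv}$, and these answers are consistent with every budget $B_u\ge\sum_{v\;\mathrm{arrived}}w_{uv}$. Hence $x_u$ is simply not determined by anything observable at the time the decision must be made; this is the same reason the paper notes that \pbalance is inapplicable to AdWords with unknown budgets. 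Your ``main obstacle'' paragraph worries about the case $w_{uv}>B_u$ and about reading off marginals, but the marginals are the easy part --- it is the normalization of the spent amount by the unknown $B_u$ inside $f(\cdot)$ that breaks the algorithm, and replacing $m_u$ by the observable marginal $w_u(S\cup\{v\})-w_u(S)$ does nothing to repair it. The primal--dual bookkeeping you sketch is the standard MSVV analysis and would indeed go through for full-information fractional AdWords, but that is not where the content of this theorem is.

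The paper's proof circumvents precisely this issue with a vertex-decomposition device. Each offline vertex $u$ is split online into stages $u_1,\dots,u_m$: stage $u_i$ is created when $v_i$ arrives and is given budget $w_{uv_i}$ (for $i<m$) or the residual $B_u-\sum_{j<m}w_{uv_j}$ for the last stage, which is computable because by then the released bids have exceeded $B_u$ and the oracle has revealed it. Every stage's budget is therefore known at the moment the stage enters the instance, so MSVV runs verbatim on the decomposed graph; the paper then shows the decomposition preserves the fractional optimum and that the algorithm's gain on $G$ dominates MSVV's gain on $G'$, inheriting the $1-1/e$ guarantee from Mehta et al. If you want to rescue your write-up, this decomposition (or some equivalent way of manufacturing known budgets out of the revealed bids) is the missing idea; the analysis of $f(x)=1-e^{x-1}$ itself can then be cited rather than redone.
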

\end{tcolorbox}

\subsection{Related Work}
\label{sec:related}

The seminal work of Karp et al.~\cite{stoc/KarpVV90} studied the unweighted and one-sided online bipartite matching model and proposed the optimal $(1-1/e)$-competitive algorithm: \rank. The analysis of \rank has been refined and simplified by a series of works \cite{sigact/BenjaminM08,soda/DevanurJK13,soda/GoelM08,sosa/EdenFFS21}. Kalyanasundaram and Pruhs studied the $b$-matching model and designed \balance (fractional) that also achieves the competitive ratio of $1-1/e$. The model has been generalized to many weighted variants, e.g., vertex-weighted~\cite{soda/AggarwalGKM11,DBLP:conf/stoc/HuangS21,talg/HuangTWZ19,DBLP:conf/wine/JinW21}, edge-weighted~\cite{DBLP:conf/focs/BlancC21,focs/FahrbachHTZ20,DBLP:conf/focs/GaoHHNYZ21}, and  AdWords~\cite{DBLP:conf/sigecom/DevenurH09,jacm/MehtaSVV07,focs/HuangZZ20}.
Besides the aforementioned generalization of \rank and \balance to the vertex-weighted and AdWords settings, Huang et al.~\cite{talg/HuangTWZ19} generalized \rank to the vertex-weighted setting with random arrival order, by utilizing a two-dimensional perturbation function. They achieved a competitive ratio of $0.653$ that is subsequently improved to $0.662$ by Jin and Williamson~\cite{DBLP:conf/wine/JinW21}. 
Another line of work adapts \rank and \balance to other matching problems, including online bipartite matching with random arrivals~\cite{stoc/KarandeMT11,stoc/MahdianY11}, oblivious matching~\cite{siamcomp/ChanCWZ18,stoc/TangWZ20} and fully online matching~\cite{jacm/HuangKTWZZ20,soda/HuangPTTWZ19,focs/HuangTWZ20}.

The most general extension of online bipartite matching is the online submodular welfare maximization problem. It captures most of the weighted variants of online bipartite matching discussed above. 
In this problem, a set of $n$ offline vertices are given, each associated with a monotone submodular function $w_u$. Upon the arrival of an online vertex, it must be assigned to one of the offline vertices and the goal is to maximized the welfare $\sum_{u} w_u(S_u)$, where $S_u$ is the set of vertices received by $u$.
The algorithm is assumed to have value oracles for the functions. I.e., an algorithm can query the value of $w_u(S)$ for an arbitrary subset $S$ of arrived online vertices.
Kapralov et al.~\cite{DBLP:conf/soda/KapralovPV13} proved that the $0.5$-competitive greedy algorithm is optimal with restricted computational powers. For the unknown i.i.d. setting, they provided an optimal $(1-1/e)$-competitive algorithm. In the random arrival model, Korula et al.~\cite{DBLP:journals/siamcomp/KorulaMZ18} proved that greedy is at least $0.5052$-competitive, and the ratio is improved to $0.5096$ by Buchbinder et al.~\cite{DBLP:journals/mp/BuchbinderFFG20}.
Our budget-additive welfare maximization problem is a special case of the submodular welfare maximization problem where every $w_u$ is a budget-additive function and admits an $(1-1/e)$-competitive algorithm.

Moreover, the AdWords with unknown budgets problem suggests us to study a more restricted oracle access for online submodular welfare maximization. We call it \emph{marginal oracle}, that on the arrival of an online vertex $v$, the algorithm can only query the value of $w_u(S)$ for $S \subseteq S_u(v) \cup \{v\}$, where $S_u(v)$ is the current matched vertex set to $u$.
Based on our discoveries in this paper, we make the following three conjectures for future work:
\begin{itemize}
    \item Online submodular welfare maximization with marginal oracles does \emph{not} admit a $1-1/e$ competitive algorithm.
    \item AdWords with unknown budgets does \emph{not} admit a $1-1/e$ competitive algorithm. 
    \item Online submodular welfare maximization admits a $0.5+\Omega(1)$ competitive algorithm.
\end{itemize}
All the three conjectures assume unlimited computational powers so that the third conjecture does not violate the impossibility result of \cite{DBLP:conf/soda/KapralovPV13}.
Observe that if the second conjecture holds, it automatically confirms the first conjecture, and implies a price of  budget-obliviousness for AdWords.

\section{Preliminaries}
\label{sec:prelim}

We first give the formal definitions of \prank and \pbalance for the vertex-weighted online bipartite matching problem and then discuss their extensions to the fractional AdWords problem. Both algorithms depends on a perturbation function.

\begin{definition}[Perturbation Function]
A perturbation function is a non-increasing and right continuous function $f(x):[0,1] \to [0,1]$.
\end{definition}

\subsection{Vertex-weighted}
Given a perturbation function $f$, the two algorithms are defined as below. Observe that \prank is a randomized algorithm and \pbalance is a deterministic algorithm.
\begin{definition}[\prank~\cite{soda/AggarwalGKM11}]
Sample a rank $y_u$ for each offline vertex $u\in L$ independently from a uniform distribution on $[0,1]$. On the arrival of an online vertex $v$, we match $v$ to the unmatched neighbor $u$ with maximum perturbed weight $f(y_u)\cdot w_{u}$.
\end{definition}

\begin{definition}[\pbalance]
On the arrival time of an online vertex $v$, we continuously match $v$ to the offline neighbor $u$ with maximum perturbed weight $f(x_u)\cdot w_{u}$, where $x_u$ is the current matched portion of $u$. 
\end{definition}

We remark that in the context of \prank, a perturbation function can be interpreted as an alternative representation of a $[0,1]$-bounded random variable, in which $f(x)$ corresponds to the value of a quantile $x$. Moreover, the right continuity is necessary for the \pbalance algorithm to be well-defined.

\subsection{AdWords}
In Section~\ref{sec:adwords} (and Section~\ref{sec:budget_additive}), we shall work on the fractional version of AdWords (and its variant) that is (slightly) more relaxed than the AdWords problem with small bid assumption. See e.g. Udwani~\cite{corr/Udwani21} for a more detailed discussion.
\subparagraph*{Fractional AdWords.} The fractional AdWords problem allows each edge $(u,v)$ to be fractionally matched by an amount of $x_{uv} \in [0,1]$, as long as the total matched portion of each online vertex $v$ is no more than a unit, i.e., $\sum_{u} x_{uv} \le 1$. 

Consider the following generalizations of \prank and \pbalance for the fractional AdWords problem:
\begin{definition}[\prank~\cite{corr/Vazirani21,corr/Udwani21}]
Sample a rank $y_u$ for each offline vertex $u\in L$ independently from a uniform distribution on $[0,1]$. On the arrival of an online vertex $v$, we \emph{continuously} match $v$ to the neighbor $u$ with maximum perturbed weight $f(y_u)\cdot w_{uv}$, among those neighbors whose budgets have not been exhausted yet.
\end{definition}

\begin{definition}[\pbalance (a.k.a. MSVV~\cite{jacm/MehtaSVV07})]
On the arrival time of an online vertex $v$, we continuously match $v$ to the offline neighbor $u$ with maximum perturbed weight $f(x_u/B_u)\cdot w_{uv}$, where $x_u$ is the current used budget of $u$. 
\end{definition}

\section{Vertex-Weighted}
\label{sec:vertex_weighted}
In this section, we consider vertex-weighted online bipartite matching. We prove that to achieve the optimal competitive ratio of $\Gamma=1-1/e$, the canonical choice of the perturbation function $f(x)=1-e^{x-1}$ is unique (up to a scale factor).

Our result holds for both \prank and \pbalance. Indeed, we establish a dominance of \pbalance over \prank in terms of worst-case competitive ratio.
\begin{restatable}{lemma}{thmrankingbalance}
	\label{thm:rankingbalance}
	For any perturbation function $f$, the competitive ratio of \prank is at most the competitive ratio of \pbalance
	for vertex-weighted online bipartite matching. %
\end{restatable}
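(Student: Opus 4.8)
The plan is to show that any hard instance for \pbalance can be converted into an equally hard (randomized) instance for \prank, so that the worst-case ratio of \prank is no better than that of \pbalance. The natural route is a coupling/derandomization argument: run \prank on a carefully designed instance and argue that, in expectation over the random ranks $y_u$, the behavior of \prank mimics the continuous evolution of \pbalance on a related instance. Concretely, given an arbitrary instance $G$, I would ``blow up'' each offline vertex $u$ into a large number $N$ of copies $u_1,\dots,u_N$, each inheriting the edges of $u$ but with weight $w_u$ (the vertex weight is unchanged since this is the vertex-weighted setting). In \prank, the $N$ ranks of the copies are i.i.d. uniform, so as $N\to\infty$ the empirical distribution of ranks among the copies of $u$ converges to the uniform distribution on $[0,1]$; the fraction of copies of $u$ with rank below a threshold $t$ concentrates at $t$. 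This is exactly the quantity that plays the role of $x_u$ in \pbalance.

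The key steps, in order: (1) Fix an arbitrary instance $G$ and its \pbalance execution; let $\Gamma_B$ be the competitive ratio \pbalance achieves on $G$ (or the relevant family, to approach $\inf$). (2) Construct the blown-up instance $G_N$ with $N$ copies of every offline vertex and the identical online side. (3) Track the \prank execution on $G_N$: when online vertex $v$ arrives, \prank continuously matches $v$ to available copies in decreasing order of $f(y_{u_i})\cdot w_{u_i}$; argue that the set of ``used-up'' copies of a given $u$ is, up to lower-order terms, exactly those with the smallest ranks, i.e. ranks in $[0,\theta_u]$ for some threshold $\theta_u$. (4) Show by a concentration argument (Chernoff/Glivenko–Cantelli over the $N$ i.i.d.\ ranks) that the aggregate matched amount on the copies of $u$ in $G_N$, as a function of the arriving online vertices, converges as $N\to\infty$ to the matched amount on $u$ in the \pbalance run on $G$, because the ``marginal'' perturbed weight at threshold $\theta$ is $f(\theta)w_u$ in both cases. (5) Conclude that $\mathbb{E}[\alg_{\prank}(G_N)] \to \alg_{\pbalance}(G)$ while $\opt(G_N)=\opt(G)$ (each copy still only needs one unit of matching, and an omniscient solution can replicate), hence the competitive ratio of \prank is $\le \Gamma_B + o(1)$; taking $N\to\infty$ and then the infimum over instances gives the claim.

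The main obstacle I expect is Step (3)–(4): making precise that \prank really consumes the low-rank copies first and that ties / the continuous fractional matching across many copies aggregate cleanly into the \pbalance trajectory. Subtleties include: copies of different offline vertices interleave in the sorted order of perturbed weights, so one must argue about a global ordering of all copies by $f(y)\cdot w$ and show the induced per-vertex thresholds $\theta_u$ satisfy the same greedy/water-filling recurrence as \pbalance's $x_u$; one also must handle the possibility that $f$ is only non-increasing and right-continuous (flat pieces of $f$ create ranges of copies with equal perturbed weight, which is fine since \pbalance is also indifferent there, but it needs a remark); and one must ensure the discretization error is genuinely $o(1)$ uniformly over the (finitely many) arrival steps, which is where the concentration bound and a union bound over vertices and steps come in. Once the blow-up limit is set up correctly, the rest is bookkeeping.
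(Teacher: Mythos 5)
Your high-level strategy (blow up offline vertices into $N$ copies, use concentration of the order statistics of $N$ i.i.d.\ uniform ranks, and couple the resulting \prank run to the \pbalance trajectory) is the same as the paper's, but your construction has a genuine gap: you keep \emph{the online side identical}, and this breaks the reduction. In the vertex-weighted setting \prank matches each arriving online vertex $v$ integrally to a \emph{single} unmatched copy, so $v$ advances the matched fraction of exactly one original offline vertex by $1/N$ and cannot reproduce the water-filling behavior of \pbalance, where $v$'s unit is split continuously across several offline neighbors as their perturbed weights equalize. Quantitatively, the $|R|$ online vertices can fill at most $|R|$ of the $N|L|$ copies, so the per-vertex matched fractions $\theta_u$ tend to $0$ as $N\to\infty$ rather than converging to the \pbalance values $x_u$. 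Moreover, your claim $\opt(G_N)=\opt(G)$ is false with unchanged weights $w_u$ per copy and no online duplication: duplicating only the offline side relaxes the offline capacity constraint, so $\opt(G_N)$ can exceed $\opt(G)$ (by up to a factor $N$ when the offline side is the bottleneck), which destroys the competitive-ratio comparison even if the algorithmic coupling were repaired.

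The fix is the one the paper uses: duplicate \emph{both} sides, creating $N$ copies of each online vertex $v$ that arrive consecutively (with a complete bipartite graph between the copy sets of $u$ and $v$ for each edge $(u,v)$), and scale the copy weights to $w_u/N$ so that $\opt$ is preserved. Then the $N$ consecutive decisions made for the copies of $v$ emulate $v$ distributing one unit in steps of $1/N$, which is a discretized water-filling. The paper formalizes the remaining discretization and tie-breaking subtleties you flag (flat pieces of $f$, right-continuity, interleaving of copies of different vertices) by introducing an intermediate class of $\eps$-\apbalance algorithms and proving two sandwiching lemmas: \prank on the blown-up instance is dominated by the best $\eps$-\apbalance on $G$ (on the high-probability event that all order statistics are within $\eps/2$ of $i/N$), and every $\eps$-\apbalance is within $\delta\cdot\opt$ of \pbalance. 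You would need some analogue of this second step as well, since bounding \prank directly by \pbalance without the intermediate relaxation is where the ``bookkeeping'' you defer actually becomes delicate.
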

We sketch our proof below and defer the detailed proof to Appendix~\ref{appendix:reduction}.

\begin{proof}[Proof Sketch] 
Given an arbitrary instance $G=(L\cup R, E,w)$, we construct an instance $G'$ so that 
the competitive ratio of \pbalance for $G$ and the competitive ratio of \prank for $G'$ are approximately the same.
\begin{itemize}
\item For each offline vertex $u \in L$, create $N$ duplicates $\{u^{(i)} \}_{i=1}^{N}$ in $G'$ and with weights $w_{u^{(i)}} = w_u$.
\item For each online vertex $v \in R$, create $N$ duplicates $\{v^{(i)}\}_{i=1}^N$ in $G'$ that arrive in a sequence.
\item For each $(u,v) \in E$, let there be a complete bipartite graph between $\{u^{(i)}\}$ and $\{v^{(i)}\}$ in $G'$.
\end{itemize}
Now, we consider the behavior of \prank on $G'$. Intuitively, although the ranks are drawn independently for each offline vertex, the set of $N$ ranks of $\{u^{(i)}\}_{i=1}^{N}$ should be ``close'' to $\left\{\frac{1}{N},\frac{2}{N},\dots,1\right\}$ with high probability when $N$ is sufficiently large. Formally, we prove the following mathematical fact in Appendix~\ref{appendix:concentration}.

  \begin{restatable}{lemma}{lemconcentration}
    \label{cctlemma}
    Let $x_1, \ldots, x_n$ be i.i.d$.$ random variables sampled from $[0, 1]$ uniformly. Let $y_i$ be the $i$-th order statistics
    of $\BK{x_1, \ldots, x_n}$, for $i = 1, \ldots, n$. Then for any parameter $\eps$ with $4n^{-1/4} < \eps < 1$, we have
    \begin{equation}
      \label{eq:cct-1}
      \Pr_{x_1, \ldots, x_n} \Bk{\abs{y_i - \frac{i}{n}} \le \eps,\, \forall i\in [n]} \ge 1 - 2ne^{-\sqrt n / 6}.
    \end{equation}
  \end{restatable}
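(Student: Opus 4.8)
The plan is to control each order statistic $y_i$ by a union bound over the $n$ coordinates, after replacing the event $\abs{y_i - i/n}\le\eps$ by a statement about how many of the $x_j$ fall below a fixed threshold. The key observation is the standard duality between order statistics and counting: for a threshold $t\in[0,1]$, we have $y_i \le t$ if and only if $\abs{\{j : x_j \le t\}} \ge i$, and similarly $y_i \ge t$ iff $\abs{\{j: x_j < t\}}\le i-1$. So to guarantee $y_i \le i/n + \eps$ it suffices that the count $C_i^{+} \eqdef \abs{\{j : x_j \le i/n+\eps\}}$ is at least $i$, and to guarantee $y_i \ge i/n-\eps$ it suffices that $C_i^{-}\eqdef\abs{\{j: x_j < i/n-\eps\}}$ is at most $i-1$ (when $i/n-\eps \le 0$ this is automatic). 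Each $C_i^{\pm}$ is a sum of $n$ i.i.d.\ Bernoulli variables with means $\min\{i/n+\eps,1\}$ and $\max\{i/n-\eps,0\}$ respectively, so the deviations we need are of magnitude $\approx \eps n$ around the mean.

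First I would set up these two families of events $\{C_i^{+} \ge i\}$ and $\{C_i^{-}\le i-1\}$ for $i\in[n]$, and argue that their intersection implies the event in \eqref{eq:cct-1}. Second, for a single $i$, I would apply a Chernoff/Hoeffding bound to $C_i^{+}$: since $\E[C_i^{+}] = \min\{i+\eps n, n\} \ge i + \eps n$ whenever $i/n+\eps\le 1$ (and the event is trivially satisfied otherwise, since then $C_i^+=n\ge i$), an additive Hoeffding bound gives $\Pr[C_i^{+} < i] \le \exp(-2(\eps n)^2/n) = \exp(-2\eps^2 n)$. The same computation handles $C_i^{-}$. Third, I would union-bound over the $2n$ events to get a failure probability at most $2n\exp(-2\eps^2 n)$. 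Finally, using the hypothesis $\eps > 4n^{-1/4}$, we have $2\eps^2 n > 32 n^{1/2} > \sqrt n/6$, so $2n\exp(-2\eps^2 n) \le 2n e^{-\sqrt n/6}$, which is exactly the bound claimed.

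I do not expect any genuine obstacle here; the lemma is a routine concentration statement and the only mild care needed is in the boundary cases (when $i/n\pm\eps$ exits $[0,1]$, where the relevant count event becomes trivial) and in making sure the inequalities between the events and \eqref{eq:cct-1} go in the right direction — i.e.\ that the chosen thresholds are sufficient, not merely necessary, for the order-statistic bounds. The choice of constants ($1/6$, the factor $4$ in $\eps > 4n^{-1/4}$) is slack enough that Hoeffding with constant $2$ in the exponent comfortably absorbs them, so I would not try to optimize them. One alternative I would keep in mind, in case one prefers to avoid the threshold-clipping bookkeeping entirely, is to invoke the classical fact that $(y_1,\dots,y_n)$ is distributed as $(G_1/G_{n+1},\dots,G_n/G_{n+1})$ where $G_k = E_1+\cdots+E_k$ for i.i.d.\ exponentials $E_j$, and then concentrate the partial sums $G_k$; but the counting argument above is cleaner and entirely self-contained, so that is the route I would write up.
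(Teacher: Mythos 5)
Your proposal is correct and follows essentially the same route as the paper: both reduce the deviation of the order statistic $y_i$ to a binomial tail event via the counting duality $y_i \le t \iff \abs{\{j : x_j \le t\}} \ge i$, apply a Chernoff-type bound, and union-bound over $i$. The only (harmless) differences are that you use an additive Hoeffding bound centered directly at $i/n \pm \eps$, whereas the paper uses the multiplicative Chernoff bound centered at $i/(n+1)$ and then spends an extra step converting $i/(n+1)$ to $i/n$; your variant is in fact slightly cleaner and yields a stronger exponent.
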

  For now, we assume the set of ranks of $\{u^{(i)}\}$ is $\{\frac{1}{N}, \frac{2}{N}, \ldots, 1\}$ for each vertex $u$. Then, upon the arrival of $\{v^{(i)}\}$, we are basically running a discretized version of the \pbalance algorithm, with a step size of $\frac{1}{N}$. In Definition~\ref{def:apbalance} of Appendix~\ref{appendix:reduction}, we formalize this intuition by introducing a family of $\eps$-\apbalance algorithms that approaches the behavior of \pbalance when $\eps \to 0$. Moreover, we prove that the \prank algorithm can be interpreted as an $\eps$-\apbalance algorithm when the ranks of $\{u^{(i)}\}$ behave nicely (which is of high probability). 
  Finally, we conclude the proof of the lemma by letting $N$ go to infinity.
\end{proof}

Equipped with the above lemma, we hereafter focus on the easy-to-analyze \pbalance algorithm, since it is deterministic while \prank is randomized.

Our main construction is a family of instances that strongly restricts the shape of the perturbation function. Naturally, our construction is built upon the classical upper triangle graph that gives the tight $1-1/e$ competitive ratio for unweighted online bipartite matching. On the other hand, our construction consists of a few novel gadgets that might be useful for other weighted online matching problems. The following lemma also serves as a 
stepping stone of our result for the AdWords with unknown budget problem in Section~\ref{sec:adwords}.

\begin{lemma}
  \label{lem:vertex_weight}
  If the \pbalance algorithm achieves a competitive ratio of $\Gamma$ for vertex-weighted online bipartite matching, then the perturbation function $f$ satisfies the following:
  \begin{equation}
    \left(\beta+1-e^{\beta-1} - \Gamma \right) \cdot f(\alpha)\ge \left( 1-(1-\Gamma) \cdot e^{\alpha} \right) \cdot \int_0^\beta f(x)\d  x, \quad \forall \alpha,\beta \in [0,1].
    \label{eq:variables_separable}
  \end{equation}
\end{lemma}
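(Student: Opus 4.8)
The plan is to construct, for every pair $(\alpha,\beta)\in[0,1]^2$, a vertex-weighted instance on which \pbalance performs poorly unless $f$ satisfies \eqref{eq:variables_separable}, then read off the inequality by comparing the algorithm's revenue against the offline optimum. The instance should combine the classical upper-triangular hard instance (which forces the $1-1/e$ shape in the unweighted case) with a weighted gadget tuned by $\alpha$ and $\beta$: a "heavy" block of offline vertices whose perturbed weight is controlled by the value $f(\alpha)$, and a population of lighter offline vertices arranged so that under \pbalance their used-budget fractions sweep across $[0,\beta]$, making the integral $\int_0^\beta f(x)\,\d x$ appear as the total matched weight of that population.

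The key steps, in order: (1) Describe the instance. Take a family of offline vertices of two types — a batch of vertices of weight roughly $1$ that will end up matched to fraction close to $1$ (contributing the $1-e^{\beta-1}$ and $\beta$ terms via the upper-triangular structure truncated at depth $\beta$), and a distinguished vertex (or small block) of a carefully chosen weight $W$ so that its perturbed weight $f(\alpha)\cdot W$ ties exactly with the perturbed weight $f(x)\cdot 1$ of a type-1 vertex at the moment its fraction is $x=\alpha$. Online vertices arrive in waves mimicking the triangle; the last wave is incident to both the distinguished vertex and the deep type-1 vertices. (2) Trace \pbalance on this instance. Using that the algorithm continuously matches to the maximizer of $f(x_u)\cdot w_u$, argue that the type-1 vertices fill up to fraction $\approx\beta$ following the standard differential-equation analysis, so their total contribution is $\approx \beta + 1 - e^{\beta-1}$ after appropriate normalization, while the distinguished block absorbs only an $f(\alpha)$-proportional share; total ALG revenue $\approx (\beta+1-e^{\beta-1})$ minus a deficit proportional to $f(\alpha)$. (3) Exhibit the offline optimum: reroute so that the distinguished block is matched fully (gaining a term proportional to $\int_0^\beta f(x)\,\d x$ — the "displaced" mass), while the triangle part is matched by the obvious perfect matching giving value $\approx 1$, and the $e^\alpha$ factor enters because the displacement propagates through $\alpha$-many earlier layers. (4) Impose $\alg \ge \Gamma\cdot\opt$ and rearrange to obtain exactly \eqref{eq:variables_separable}; invoke Lemma~\ref{thm:rankingbalance} so the bound applies whether we started from \prank or \pbalance.

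The main obstacle is step (2)–(3): getting the bookkeeping so that the *same* instance yields both the coefficient $\beta+1-e^{\beta-1}-\Gamma$ on $f(\alpha)$ and the coefficient $1-(1-\Gamma)e^\alpha$ on $\int_0^\beta f(x)\,\d x$, with no leftover error terms. This requires choosing the weight $W$ of the distinguished block and the multiplicities in each online wave so that the $\alpha$-dependence shows up purely through $e^\alpha$ (an exponential decay over $\alpha$ layers of the triangle) and the $\beta$-dependence through the truncated integral and the $1-e^{\beta-1}$ term simultaneously. I expect to need a continuous/limiting instance (send the discretization parameter $N\to\infty$, as in the proof sketch of Lemma~\ref{thm:rankingbalance}) so that sums become the integrals $\int_0^\beta f(x)\,\d x$ and $\int_0^\alpha \cdots$ cleanly, and to verify that the perturbed-weight ties are resolved in the adversary's favor (or that the tie-breaking does not matter in the limit). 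Once the instance is pinned down, deriving \eqref{eq:variables_separable} is just algebra.
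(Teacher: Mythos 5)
Your proposal correctly identifies the overall architecture (upper triangles to generate exponential terms, a pre-fill to level $\alpha$, a block whose perturbed weight is governed by $f(\alpha)$, a limiting argument, and reading off the inequality from $\alg\ge\Gamma\cdot\opt$), but it stops exactly at the point where the actual content of the proof lies, and the mechanism you sketch for producing the term $\int_0^\beta f(x)\,\d x$ would not work. You propose that this integral should ``appear as the total matched weight'' of a population of unit-weight offline vertices whose used-budget fractions sweep across $[0,\beta]$. But the revenue collected from a vertex of weight $w_u$ matched to fraction $x_u$ is $w_u x_u$; the perturbation values $f(x)$ encountered along the way never enter the objective, so no choice of multiplicities on a single unit-weight population makes $\int_0^\beta f$ show up in either $\alg$ or $\opt$. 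The paper's construction instead uses a \emph{self-referential} weighting: it builds $m$ parallel groups in which the $i$-th group's offline vertices (the ones pre-filled to level $\alpha$ by a depth-$e^\alpha$ triangle) carry weight $f(\tfrac{i}{m}\beta)$, while a second block of $nm$ vertices carries weight $f(\alpha)$ and absorbs a full bipartite wave followed by a truncated triangle. The integral $\int_0^\beta f$ then arises as the Riemann sum of the group weights over $i$, and the crucial tie $f(\alpha)\cdot f(\tfrac{i}{m}\beta)=f(\tfrac{i}{m}\beta)\cdot f(\alpha)$ holds \emph{simultaneously for every group} precisely because the weights are values of $f$ itself; this is what lets the adversary route all of the second wave to the $f(\alpha)$-block and makes the coefficients $\beta+1-e^{\beta-1}$ (on $f(\alpha)$) and $1-(1-\Gamma)e^\alpha$ (on $\int_0^\beta f$) come out with no error terms. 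Your single ``distinguished vertex of weight $W$'' tying at one value of $x$ cannot achieve this for all of $[0,\beta]$ at once. Since you explicitly defer this bookkeeping as ``the main obstacle,'' and the obstacle is the theorem's central new idea, the proposal as written does not constitute a proof.
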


We defer its proof till the end of the section and proceed by first proving our main theorem.
\thmvertexweight*
\begin{proof}
  By \Cref{thm:rankingbalance}, it suffices to prove the theorem for \pbalance.
  By \Cref{lem:vertex_weight} with $\Gamma=1-1/e$, the perturbation function $f(x)$ satisfies the following:
  \[
    \frac{f(\alpha)}{1 - e^{\alpha - 1}} \ge \frac{\int_0^\beta f(x)\dx} {\beta - e^{\beta-1} + e^{-1}}, \quad \forall \alpha,\beta\in (0,1).
  \]
  Let $M \eqdef \inf_{\alpha \in (0,1)} \frac{f(\alpha)}{1 - e^{\alpha - 1}}$. We must then have
  \begin{gather}
    f(\alpha) \ge M(1 - e^{\alpha - 1}),\quad \forall \alpha \in [0,1], \label{eq:f_point}\\
    \int_0^\beta f(x) \dx \le M(\beta - e^{\beta-1} + e^{-1}), \quad \forall \beta \in [0,1] \label{eq:f_integral}.
  \end{gather}
  Taking the integral of $f(x)$ and applying \eqref{eq:f_point}, we have
  \[
    \int_0^\beta f(x) \dx \ge M \int_0^\beta (1 - e^{x - 1}) \dx = M(\beta - e^{\beta-1} + e^{-1}).
  \]
  Together with \eqref{eq:f_integral}, we conclude that $f(x) = M (1-e^{x-1})$ for all $x \in [0,1]$ according to the %
  right-continuity of function $f$. 
  Specifically, if there exists an $x^* \in [0,1)$, that $f(x^*) = M (1-e^{x^*-1}) + \eps$ for some $\eps > 0$, then there exists a sufficiently small $\delta > 0$ such that for any $x\in [x^*, x^*+\delta]$, $f(x) \geq M(1 - e^{x - 1}) + \frac{\eps}{2}$. Then we can see by \eqref{eq:f_point} that
  \[
    \int_0^1 f(x) \dx \ge \int_0^1 \left(M (1 - e^{x - 1}) + \ind{x^* \leq x \leq x^*+\delta}\cdot \frac{\eps}{2}\right) \dx = \frac{M}{e} +\frac{\eps \delta}{2},
  \]
  which violates the statement of \eqref{eq:f_integral}, $\int_0^1 f(x)\dx \le \frac{M}{e}$. Therefore, $\forall x\in[0,1)$, $f(x)=M(1 - e^{x - 1})$. Also, for $f(1)$, note that $f(x)$ is decreasing, so $f(1)\leq \lim_{x\to 1^{-}}f(x)=0$. Then $f(1)=0$.

  This shows that  $\forall x\in[0,1]$, $f(x)=M(1 - e^{x - 1})$, concluding the proof of the theorem.
\end{proof}

\subsection{Proof of Lemma~\ref{lem:vertex_weight}}
Fix $\alpha,\beta \in [0,1]$. Let $n,m$ be sufficiently large numbers. Refer to Figure~\ref{fig 1} for our instance.

\begin{figure}[H]
  \centering
  \includegraphics[width = 0.8 \textwidth]{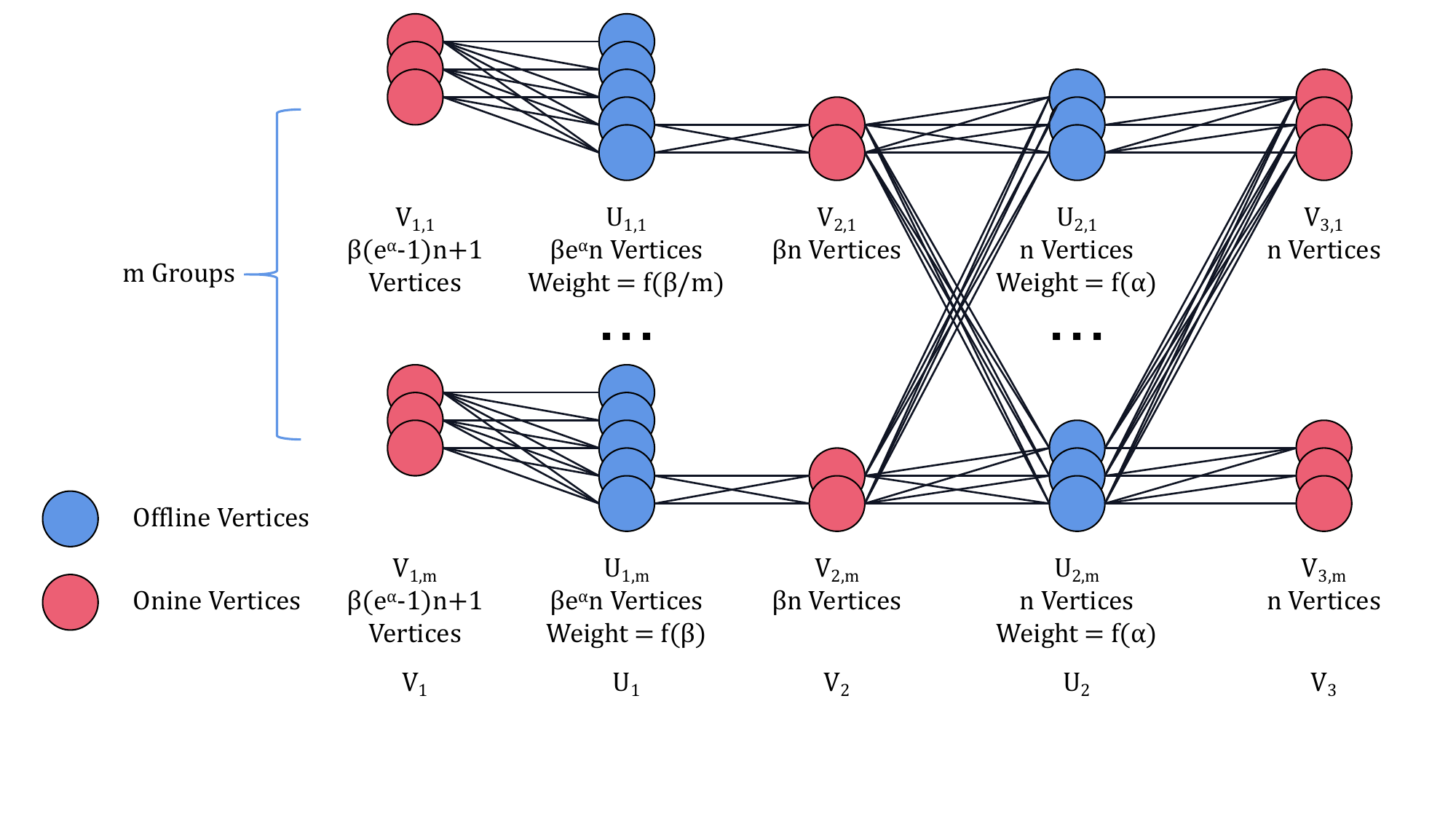}\vspace{-20pt}
  \caption{Instance 1}
  \label{fig 1}
\end{figure}

Our construction consists of $m$ groups of vertices, and each group consists of $5$ parts. 
We use $V_{1,i}, V_{2,i}, V_{3,i}$ to denote the three online parts of group $i$ and $U_{1,i}, U_{2,i}$ to denote the two offline parts of group $i$.
Let $V_j = \cup_{i \in m} V_{j,i}$ for $j \in \{1,2,3\}$ and $U_j = \cup_{i \in m} U_{j,i}$ for $j \in \{1,2\}$.
We first define the vertices of the graph: for each $i \le m$,
\begin{itemize}
\item $U_{1,i}$ consists of $\left(\beta e^\alpha n+1\right)$ offline vertices\footnote{When $\beta e^\alpha n$ is a fraction, let there be $\lceil \beta e^\alpha n \rceil$ vertices. Since we are interested in the case when $n,m$ are sufficiently large, we safely omit the ceiling function for the simplicity of notation. We apply a similar treatment for fractions throughout the paper.} with the same weight of $f(\frac{i}{m}\cdot \beta)$.
\item $U_{2,i}$ consists of $n$ offline vertices with the same weight of $f(\alpha)$. 
\item $V_{1,i}, V_{2,i}, V_{3,i}$ consist of $\beta(e^\alpha-1)n, \beta n, n$ online vertices, respectively.
\end{itemize}
The arrival order of the vertices is the following:
\[
	V_{1,1}\to V_{1,2}\to \cdots\to V_{1,m}\to V_{2,1}\to V_{2,2}\to \cdots\to V_{2,m}\to V_{3,1}\to V_{3,2}\to \cdots\to V_{3,m}.
\]
Next, we define the edges of the graph:
\begin{itemize}
\item $V_{1,i}$ and $U_{1,i}$ are connected as an \textit{upper triangle}. I.e., the $j$-th vertex of $V_{1,i}$ is connected to the $k$-th vertex of $U_{1,i}$ if and only in $k\ge j$.
\item $V_{2,i}$ and the last $\beta n$ vertices in $U_{1,i}$ are \textit{fully connected}.
\item $V_2$ and $U_2$ are \textit{fully connected}.
\item $V_3$ and $U_2$ are connected as a \textit{upper triangle}. I.e., the $j$-th vertex of $V_3$ is connected to the $k$-th vertex of $U_2$ if and only in $k\ge j$.
\end{itemize}

We first calculate the optimum matching of the graph. That is, matching together $(V_3,U_2)$ and $(V_1 \cup V_2,U_1)$. Therefore, we have
\begin{equation}
\label{eq:vertex_opt}
\opt = \underbrace{nm\cdot f(\alpha)}_{(V_3,U_2)}+ \underbrace{\sum_{i=1}^{m}\beta e^\alpha n\cdot f\left(\frac{i}{m}\cdot \beta\right)}_{(V_1 \cup V_2,U_1)} = nm\cdot\left(f(\alpha)+e^\alpha\int_{0}^{\beta}f(x)\d  x + o(1) \right),
\end{equation}
where the second equality holds when $m$ goes to infinity. %

Next, we analyze the the performance of \pbalance. We split the whole instance into three stages, corresponding to the arrivals of $V_1,V_2,V_3$ respectively.
\subparagraph*{First stage ($V_1$).} Upon the arrival of each vertex in $V_1$, it matches uniformly to its neighbours in $U_1$. The behavior of \pbalance is the same for different groups. We analyze the matched portion of the last $\beta n$ vertices of each group after the first stage:  
\[
x_u=\frac{1}{\beta e^\alpha n}+\frac{1}{\beta e^\alpha n-1}+\cdots+\frac{1}{\beta n} = \ln \left( \frac{\beta e^\alpha n}{\beta n} \right) + o(1) = \alpha + o(1), 
\]
where the equality holds when $n$ goes to infinity.

\subparagraph*{Second Stage ($V_2$).} Upon the arrival of each vertex $v$ of $V_{2,i}$, it will be weighing the perturbed weights from $U_{1,i}$ and $U_2$:
\begin{align*}
& f(x_{u_{1}}) \cdot w_{u_{1}} = f(\alpha+o(1)) \cdot f\left(\frac{i}{m}\right), && \text{for } u_1 \in U_{1,i} \cap N(v), \\
\text{and} \quad & f(x_{u_{2}}) \cdot w_{u_{2}} \ge f\left(\frac{i}{m}\right) \cdot f(\alpha), && \text{for } u_2 \in U_2.
\end{align*}
Notice that the perturbed weights from $U_2$ is always larger than the perturbed weights from $U_{1,i}$. We claim that in the second stage, all vertices of $V_2$ would be fully matched to $U_2$ by \pbalance. Thus, at the end of the second stage, all vertices in $U_2$ have matched portion $\beta$.

\subparagraph*{Third stage ($V_3$).} The behavior of the last stage is similar to the behavior of the first stage, except that all vertices in $U_2$ start with a matched portion of $\beta$. 
After the arrival of the $k$-th vertex in $V_3$, the matched portion of its neighbor equals
\[
\beta + \frac{1}{nm} + \frac{1}{nm-1} + \cdots + \frac{1}{nm-k+1} \ge \beta + \ln \left(\frac{nm}{nm-k+1} \right).
\]
Consequently, only the first $(1 - e^{\beta-1}) nm + 1$ vertices from $V_3$ can be matched. For the rest of the online vertices, all their neighbors would be fully matched before their arrivals.

To sum up, we calculate the performance of \pbalance.
\begin{align}
\alg &\le \underbrace{\sum_{i=1}^{m}  \left(\beta (e^\alpha-1) n + 1\right) \cdot f\left(\frac{i}{m}\cdot \beta\right)}_{(V_1,U_1)} + \underbrace{\beta nm\cdot f(\alpha)}_{(V_2,U_2)} + \underbrace{\left((1-e^{\beta-1}) nm +1 \right) \cdot f(\alpha)}_{(V_3,U_2)} \notag \\
&= nm\cdot\left((e^\alpha-1) \int_{0}^{\beta}f(x)\d x+(\beta+1-e^{\beta-1}) \cdot f(\alpha)+o(1)\right).
\label{eq:vertex_alg}
\end{align}

Together with \eqref{eq:vertex_opt} and the assumption that \pbalance is $\Gamma$-competitive, we conclude the proof by letting $n,m \to \infty$:
\begin{align*}
(e^\alpha-1) \int_{0}^{\beta}f(x)\d x+(\beta+1-e^{\beta-1}) \cdot f(\alpha) \ge \Gamma \cdot \left(f(\alpha)+e^\alpha\int_{0}^{\beta}f(x)\d x\right) \\
\iff \left(\beta+1-e^{\beta-1} - \Gamma \right) \cdot f(\alpha)\ge \left( 1-(1-\Gamma) \cdot e^{\alpha} \right) \cdot \int_0^\beta f(x)\d x, \quad \forall \alpha,\beta \in [0,1].
\end{align*}

\section{AdWords with Unknown Budget}
\label{sec:adwords}

\label{sec:unknown}

In this section, we prove \Cref{thm:general}. 
\thmgeneral*

We first construct a hard instance for which \prank with $f(x)=1-e^{x-1}$ only achieves a competitive ratio of $0.624$. 
Recall that the vertex-weighted online bipartite matching problem is a special case of the AdWords problem. 
Together with \Cref{thm:vertexweight}, it should be convincing that \prank (with any perturbation function) cannot achieve the prominent competitive ratio of $1-1/e$. 

Our construction for general perturbation functions has a similar structure as the construction for the canonical perturbation function. 
On the other hand, general perturbation functions introduce extra technical difficulties to our argument that we shall discuss soon.

\subsection{Canonical Perturbation Function $f(x)=1-e^{x-1}$}

We prove the result by the following lemma.

\begin{lemma}
    \label{lem:adwords_special}
    If \prank with perturbation function $f(x)=1-e^{x-1}$ achieves a competitive ratio of $\Gamma$ on AdWords, then
    \begin{equation}
        (1-\Gamma) \cdot f(\alpha) \ge (\Gamma - \alpha) \cdot \int_0^\alpha f(x) \dx + \Gamma \cdot \int_{\alpha}^1 f(x) \dx, \quad \forall \alpha \in [0,1]. \label{eq:adwords_special}
    \end{equation}
\end{lemma}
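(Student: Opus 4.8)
\textbf{Proof proposal for Lemma~\ref{lem:adwords_special}.}

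The plan is to build an AdWords instance tailored to the single parameter $\alpha$, run \prank with the canonical function, compute (a lower bound on) $\opt$ and (an upper bound on) $\alg$, and then let the size parameters go to infinity so that the $\Gamma$-competitiveness hypothesis becomes exactly the claimed inequality. Following the pattern of Section~\ref{sec:vertex_weighted}, I would build $m$ groups and inside each group use an upper-triangle gadget on the online side to force the key offline vertices to a prescribed fractional load. The new feature, exploiting AdWords over vertex-weighted matching, is that an offline vertex now carries a \emph{budget} $B_u$ and edges carry \emph{bids} $w_{uv}$ that need not equal the budget; this lets us decouple ``how much of $u$ is consumed'' from ``how much revenue $u$ contributes.'' Concretely, I would pick one family of offline vertices whose budget is spent up to fraction $\alpha$ during a first phase (so that their perturbed multiplier is $f(\alpha)$), a second family of ``fresh'' offline vertices, and an arrival sequence so that in the optimal solution the first family is fully used (contributing $\int_0^\alpha f$-type mass reflecting the triangle, plus the tail $\int_\alpha^1 f$) while \prank, misled by the perturbed weights, routes the middle online block onto the first family and can only partially serve the last block. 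The term $(\Gamma-\alpha)\int_0^\alpha f(x)\,\dx$ on the right suggests that the ``already-consumed'' prefix of width $\alpha$ is what \prank wastes, and $\Gamma\int_\alpha^1 f(x)\,\dx$ the tail it fails to reach; the $(1-\Gamma)f(\alpha)$ on the left is the per-unit surplus \prank does extract from the second family.

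In more detail, the key steps in order: (1) \emph{Instance.} Fix $\alpha\in[0,1]$, take $n,m\to\infty$; create $m$ groups, each with an offline ``triangle'' family $U_{1,i}$ of size $\approx e^\alpha n$ with budgets/bids arranged so a preliminary online block of size $\approx(e^\alpha-1)n$ arriving as an upper triangle drives every vertex of $U_{1,i}$ to consumed fraction $\approx\alpha$ (here $\ln(e^\alpha n/n)=\alpha$, exactly as in the proof of Lemma~\ref{lem:vertex_weight}); add an offline family $U_2$ of total size $\approx nm$ with bids giving them perturbed weight comparable to $f(\alpha)$; add a middle online block fully connected to $U_{1,i}$ and to $U_2$, and a final online block connected to $U_2$ as an upper triangle. (2) \emph{Optimum.} Match the triangle blocks to $U_1$ and the final block to $U_2$; this yields $\opt = nm\,(\,f(\alpha) + \text{(triangle contribution)}\, + o(1))$, where the triangle contribution is an $\int_0^\alpha$ (or $\int_\alpha^1$) integral of $f$ after the change of variables $i/m\mapsto x$. (3) \emph{Algorithm.} Phase analysis: the first block loads $U_1$ to $\approx\alpha$; in the middle phase \prank compares $f(\alpha)\cdot w(U_1)$ against $f(0^+)\cdot w(U_2)$ and — by the choice of bids — prefers $U_1$, so $U_2$ stays essentially untouched while $U_1$ fills up; in the final phase the upper triangle onto $U_2$ can only reach an $(1-e^{\,\cdot\,})$ fraction before neighbours are exhausted, and since $U_2$ still has nearly full budget the reached fraction is the usual $1-1/e$-type quantity, but crucially the wasted prefix has width $\alpha$ because of the middle-phase consumption. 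Summing, $\alg \le nm\,((\beta{=}\text{relevant width})\cdot f(\alpha) + (\text{integral terms}) + o(1))$. (4) \emph{Conclude.} Impose $\alg \ge \Gamma\cdot\opt$, divide by $nm$, let $n,m\to\infty$, and rearrange into \eqref{eq:adwords_special}.

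The main obstacle I anticipate is \emph{calibrating the bids and budgets in the middle phase} so that \prank genuinely commits the middle online block to $U_1$ rather than $U_2$ (or splits in a controlled way): because $f(x)=1-e^{x-1}$ is fixed here we have concrete numbers to juggle, but the comparison $f(\alpha)w_{U_1}$ vs.\ $f(\text{current load of }U_2)w_{U_2}$ changes continuously as $U_2$ starts to fill, so I would likely need $w_{U_2}$ slightly smaller than $w_{U_1}$ and a careful argument (as in the ``Second Stage'' paragraph of Section~\ref{sec:vertex_weighted}) that the preference is strict throughout, taking $n,m$ large to absorb the $o(1)$ slack. A secondary subtlety is getting the budget-exhaustion bookkeeping right in AdWords — once an offline vertex's budget is spent \prank removes it from consideration, and I must ensure that in the final triangle phase the exhaustion pattern reproduces the clean $\ln(\cdot)$ integral rather than something skewed by residual budget left from earlier phases. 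Once these calibrations are pinned down, the rest is the same routine $n,m\to\infty$ limit and algebraic rearrangement used to derive \eqref{eq:variables_separable} from \eqref{eq:vertex_opt} and \eqref{eq:vertex_alg}.
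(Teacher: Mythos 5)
There is a genuine gap: you are pattern-matching on the vertex-weighted construction of Section~\ref{sec:vertex_weighted}, but that family of instances cannot yield \eqref{eq:adwords_special}. The paper's proof uses a completely different instance whose two essential features your sketch lacks. First, it introduces a \emph{single} offline vertex $u_0$ with \emph{unlimited budget} whose bids to the online vertices $v_1,\dots,v_n$ are individually calibrated as $b_i=f(i/n)/f(\alpha)$ — a genuinely AdWords feature (different bids on different edges of the same offline vertex) with no analogue in your group/triangle gadgets. In $\opt$, $u_0$ absorbs all of $v_1,\dots,v_n$ for revenue $\approx \frac{n}{f(\alpha)}\int_0^1 f$, which is where the $\int_\alpha^1 f$ tail in \eqref{eq:adwords_special} comes from. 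Second, and decisively, the coefficient structure $(\Gamma-\alpha)\int_0^\alpha f$ arises from taking an \emph{expectation over the random rank $y_0$ of $u_0$}: with probability $1-\alpha$ (i.e.\ $y_0\ge\alpha$) the algorithm sends nothing to $u_0$, and with probability $\alpha$ it sends at most $\alpha n$ vertices there, the cap being enforced by the log-concavity of $f(x)=1-e^{x-1}$ via $\frac{f(0)}{f(\alpha)}f(i/n)\le f(i/n-\alpha)$. The $\alpha$ multiplying $\int_0^\alpha f$ is a \emph{probability}, not a load fraction. Your proposal is a deterministic, Balance-style phase analysis (loads driven to $\alpha$ by triangles, a middle block routed by perturbed-weight comparisons); such an analysis can only reproduce an inequality of the shape of \eqref{eq:variables_separable}, and indeed the paper has no Ranking-to-Balance reduction for AdWords (\Cref{thm:rankingbalance} is proved only for the vertex-weighted setting).

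A further sign the approach would fail: for the canonical $f$, the vertex-weighted construction is tight at $1-1/e$ (that is the content of \Cref{thm:vertexweight}), so any instance that is ``essentially vertex-weighted'' — bids equal to, or proportional to, budgets within each offline family, as in your $U_1,U_2$ groups — cannot certify a ratio below $1-1/e$ for this $f$. You correctly note that decoupling bids from budgets is the extra power of AdWords, but the concrete construction you describe does not exploit it; the missing idea is precisely the unlimited-budget vertex $u_0$ with the decreasing bid sequence $b_i$ and the case analysis over its random rank.
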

\begin{proof}
  Fix $\alpha \in [0,1]$. Let $n$ be a sufficiently large number. Refer to \Cref{fig 2} for our instance.

  \begin{figure}[H]
    \centering
    \includegraphics[width = 0.8 \textwidth]{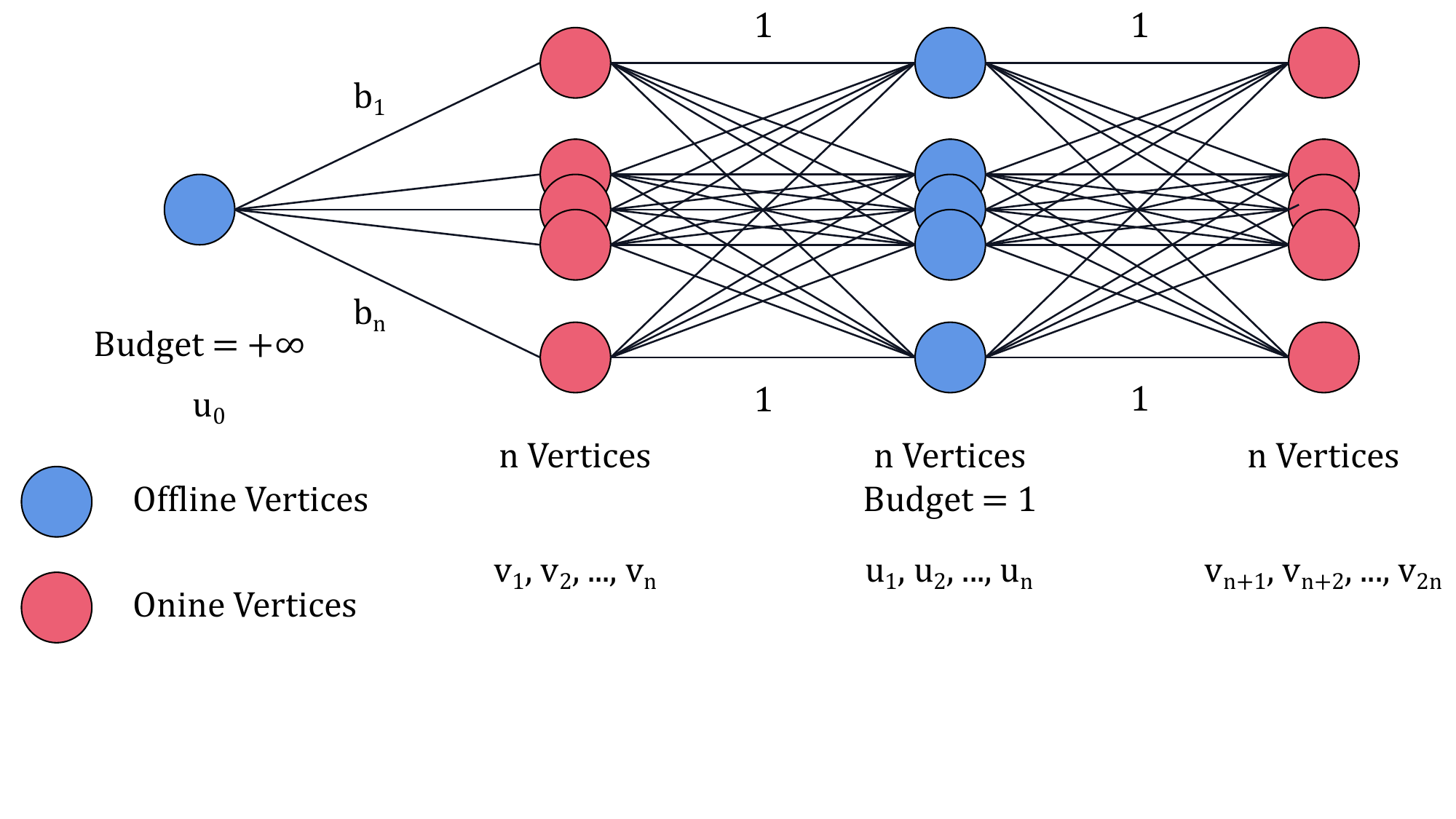}\vspace{-40pt}
    \caption{Instance 2}
    \label{fig 2}
  \end{figure}

  Our construction consists of $n+1$ offline vertices $u_0,u_1,\dots,u_n$ and $2n$ online vertices $v_1,v_2,\dots,v_{2n}$. The budgets of $u_1,u_2,\dots,u_n$ are all $1$ and the budget of $u_0$ is unlimited. The online vertices arrive in ascending order of their indices, i.e. $v_i$ is the $i$-th arriving online vertex. Next, we define the edges of the graph:
\begin{itemize}
    \item $v_1,v_2,\dots,v_{n}$ are connected to $u_0$, with edge weights $b_1,b_2,\dots,b_{n}$ respectively.
    \item $v_1,v_2,\dots,v_{2n}$ are fully connected to $u_1,u_2,\dots,u_{n}$ with edge weights $1$.
\end{itemize}
Before we define the weights, we make an extra assumption to simplify our analysis:
\[
\forall 1\le i \le n, \text{the rank of } u_i \text{ is } y_i=i/n.
\]
Indeed, by \Cref{cctlemma}, we would have that the set of ranks $\{y_1,\ldots,y_n\}$ are ``close'' to $\{\frac{1}{n},\frac{2}{n},\dotsm,1\}$. Moreover, all vertices $u_1,\ldots,u_n$ are symmetric in our graph.
This assumption would significantly simplify our analysis and can be removed by a more conservative choice of the edge weights. We omit the tedious details for simplicity.

Let $b_i=\frac{f(i/n)}{f(\alpha)}$. The offline optimum is to match $v_1,v_2,\dots,v_n$ to $u_0$ and to match $v_{n+1},v_{n+2},\dots,v_{2n}$ to $u_1,u_2,\dots,u_n$, respectively. Consequently,
\begin{align}
\opt & = n+\sum_{i=1}^{n}b_i=n\cdot\left(1+\frac{1}{f(\alpha)}\int_{0}^{1}f(x)\d  x+o(1)\right). \notag
\end{align}

With the assumption, the only randomness of \prank is the rank $y_0$ of $u_0$. 
\subparagraph*{Case 1. ($y_0 \geq \alpha$)} For each online vertex $v_i$, the perturbed weight of $(u_0,v_i)$ is 
\[
b_i\cdot f(y_0)=\frac{f(y_0)}{f(\alpha)}\cdot f\left(\frac{i}{n}\right) \le f \left(\frac{i}{n}\right),
\]
while the perturbed weight of $(u_i,v_i)$ is $f(y_i) = f\left(\frac{i}{n}\right)$. 
Therefore, \prank matches $(u_i,v_i)$ for all $ 1\le i \le n$ and we have $\alg(y_0) =n$.

\subparagraph*{Case 2. ($y_0 < \alpha$)} In this case, some of the $v_1,v_2,\dots,v_n$ would be matched to $u_0$. However, the number of vertices matched to $u_0$ should be no more than $\alpha n$. The reason is as follows. For each online vertex $v_i$, suppose the number of previous vertices matched to $u_0$ is larger than $\alpha n$, then the perturbed weight of $(u_0,v_i)$ is $\frac{f(y_0)}{f(\alpha)}\cdot f\left(\frac{i}{n}\right)$, while the perturbed weight of $(u_{i-\alpha n}, v_i)$ is $f\left(\frac{i}{n}-\alpha\right)$. Notice that $f(x)=1-e^{x-1}$ is a log-concave function. Hence,
\[
\frac{f(y_0)}{f(\alpha)}\cdot f\left(\frac{i}{n}\right) \le \frac{f(0)}{f(\alpha)} \cdot f\left(\frac{i}{n}\right) \leq f\left(\frac{i}{n}-\alpha \right).
\]
In other words, $v_i$ will not match $u_0$. This concludes the proof that the number of vertices matched to $u_0$ is no more than $\alpha n$. Notice that $b_i$'s are non-increasing, we have
\[
\alg(y_0) \leq \sum_{i=1}^{\alpha n} b_i + n = \sum_{i=1}^{\alpha n} \frac{f(i/n)}{f(\alpha)} + n = n \cdot \left( \frac{1}{f(\alpha)}\int_{0}^{\alpha}f(x)\d  x+ 1 + o(1)\right).
\]

Putting the two cases together and assuming that \prank algorithm is $\Gamma$-competitive, we conclude the proof of the lemma.
\begin{align*}
&\Gamma \le \frac{\mathbb E[\alg]}{\opt} = \frac{\alpha \cdot n \cdot \left( \frac{1}{f(\alpha)}\int_{0}^{\alpha}f(x)\dx+ 1 + o(1)\right) + (1-\alpha) \cdot n}{n\cdot\left(1+\frac{1}{f(\alpha)}\int_{0}^{1}f(x)\dx+o(1)\right)} = \frac{\alpha \cdot \int_0^\alpha f(x) \dx + f(\alpha)}{\int_0^1 f(x) \dx + f(\alpha)} \\
&\iff (1-\Gamma) \cdot f(\alpha) \ge (\Gamma - \alpha) \cdot \int_0^\alpha f(x) \dx + \Gamma \cdot \int_{\alpha}^1 f(x)\dx. \qedhere
\end{align*}
\end{proof}

\begin{corollary}
\prank with $f(x)=1-e^{x-1}$ is at most $0.624$-competitive for AdWords.
\end{corollary}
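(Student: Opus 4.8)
The plan is to extract a single numerical inequality from Lemma~\ref{lem:adwords_special} by choosing a convenient value of $\alpha$ and then plugging in the explicit canonical function $f(x)=1-e^{x-1}$. Since $f$ is now fully determined, both $\int_0^\alpha f$ and $\int_\alpha^1 f$ become elementary closed-form expressions in $\alpha$: we have $\int_0^\alpha f(x)\,\d x = \alpha - e^{\alpha-1} + e^{-1}$ and $\int_\alpha^1 f(x)\,\d x = (1-\alpha) - e^{-1} + e^{\alpha-1} - 0 = 1 - \alpha + e^{\alpha - 1} - e^{-1}$, and $f(\alpha) = 1 - e^{\alpha-1}$. Substituting these into \eqref{eq:adwords_special} yields, for every $\alpha\in[0,1]$, a bound of the form
\[
  \Gamma \le \frac{\alpha\bigl(\alpha - e^{\alpha-1}+e^{-1}\bigr) + \bigl(1-e^{\alpha-1}\bigr)}{\bigl(1 - e^{-1}\bigr) + \bigl(1-e^{\alpha-1}\bigr)},
\]
where I have used $\int_0^1 f = 1 - e^{-1}$. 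So the competitive ratio of \prank with the canonical function is at most the minimum over $\alpha\in[0,1]$ of the right-hand side.

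Next I would minimize this one-variable function of $\alpha$. I expect the optimum to occur at some interior point; a clean way to present it is simply to exhibit one good choice of $\alpha$ (found numerically — e.g. somewhere around $\alpha \approx 0.6$) and verify by direct substitution that the right-hand side evaluates to something strictly below $0.624$. To make the write-up rigorous without calculus, I would pick a rational $\alpha$, bound $e^{\alpha-1}$ and $e^{-1}$ between explicit rational lower and upper bounds (so that the numerator is bounded above and the denominator below), and conclude the fraction is $\le 0.624$. Because the statement we need is only an inequality ($\le 0.624$), there is slack to absorb the crudeness of such rational bounds.

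The main obstacle — really the only one — is the arithmetic bookkeeping: choosing $\alpha$ well enough that the bound actually dips below $0.624$ (the true minimum is close to that value, so the choice of $\alpha$ matters), and then controlling the transcendental quantities $e^{\alpha-1}$, $e^{-1}$ with rational enough estimates that the final comparison goes through cleanly. This is routine but must be done carefully, since the margin ($1-1/e \approx 0.632$ versus $0.624$) is not large. I would organize it as: (i) substitute $f$ and simplify to the displayed fraction; (ii) fix a specific $\alpha$; (iii) bound numerator and denominator by rationals using $2.718 < e < 2.719$; (iv) conclude the ratio is below $0.624$, completing the corollary.
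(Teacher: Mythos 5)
Your overall strategy is exactly the paper's: rearrange \eqref{eq:adwords_special} into $\Gamma \le \bigl(\alpha\int_0^\alpha f + f(\alpha)\bigr)/\bigl(\int_0^1 f + f(\alpha)\bigr)$, substitute the canonical $f$, and evaluate at one well-chosen $\alpha$ with rational bounds on the exponentials. However, two of your concrete numbers are wrong, and together they would sink the write-up. First, $\int_0^1 (1-e^{x-1})\,\dx = e^{-1}$, not $1-e^{-1}$ (likewise $\int_\alpha^1 f(x)\,\dx = e^{\alpha-1}-\alpha$, not $1-\alpha+e^{\alpha-1}-e^{-1}$; note your two pieces sum to $1$, which is already inconsistent with your claimed total). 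Using $1-e^{-1}$ in the denominator inflates it and makes the bound look far more generous than it is.

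Second, and more importantly, the minimum of the \emph{correct} one-variable function
\[
  g(\alpha)=\frac{\alpha\bigl(\alpha-e^{\alpha-1}+e^{-1}\bigr)+1-e^{\alpha-1}}{e^{-1}+1-e^{\alpha-1}}
\]
is nowhere near $\alpha\approx 0.6$: one has $g(0)=1-1/e\approx 0.632$, $g$ increases to $1$ at $\alpha=1$, and $g(0.6)\approx 0.729$, so your proposed evaluation point proves nothing. The minimum sits near $\alpha\approx 0.15$ with value $\approx 0.6233$, and the set of $\alpha$ with $g(\alpha)<0.624$ is roughly the interval $(0.09,0.17)$ — a narrow window. (The paper takes $\alpha=0.1$, giving $g(0.1)\approx 0.6237$.) Your erroneous denominator happens to make $\alpha=0.6$ ``work'' numerically, which is why the mistake is easy to miss; with the correct integrals it fails. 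The fix is mechanical — use $\int_0^1 f = e^{-1}$ and evaluate at $\alpha=0.1$ (or any point in the valid window) with your rational bounds on $e^{-0.9}$ and $e^{-1}$ — but as written the proof does not go through.
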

\begin{proof}
Plugging in $\alpha=0.1$ and $f(x)=1-e^{x-1}$ in equation~\eqref{eq:adwords_special},
we have
\[
\Gamma \le \frac{\alpha \cdot \int_0^\alpha f(x) \d x + f(\alpha)}{\int_0^1 f(x) \d x + f(\alpha)} = \frac{0.1 \cdot \left(0.1 - e^{-0.9} + e^{-1}\right) + 1 - e^{-0.9}}{e^{-1} + 1 - e^{-0.9}} < 0.624~.
\qedhere
\]

\end{proof}

\subsection{General Perturbation Functions}
\label{subsec:general-function}
Before we delve into the detailed proof, we explain the technical difficulty introduced by general perturbation functions. Our plan is to generalize \Cref{lem:adwords_special}: if we are able to prove equation~\eqref{eq:adwords_special} for an arbitrary function $f$,  we would then conclude our theorem by combining it with equation~\eqref{eq:variables_separable}.

However, our argument of the second case ($y_0<\alpha$) of {\Cref{lem:adwords_special}} crucially relies on the specific formula of $f(x)$. 
I.e., to upper bound the performance of \prank, we use the fact that $f(0)\cdot f(x) \le f(\alpha)\cdot f(x-\alpha)$.

For a general perturbation function $f(x)$, if we stick to the same property that no more than $\alpha n$ vertices can be matched to $u_0$ when $y_0 < \alpha$, we could achieve it by setting the weights $b_i$ to be smaller. Indeed, if $f\left(\frac{i}{n}-\alpha\right)\geq f(0) \cdot b_i$ holds, the previous analysis can be easily generalized. Hence, a natural attempt is to modify the instance as the following.
\[
  b_i=\left\{\begin{array}{lcl}\frac{1}{f(\alpha)}\cdot f\left(\frac{i}{n}\right)&&{i<\alpha n,}\\\min\left\{\frac{1}{f(\alpha)}\cdot f\left(\frac{i}{n}\right),\frac{1}{f(0)}\cdot f\left(\frac{i}{n}-\alpha\right)\right\}&&{i\geq \alpha n.}\end{array} \right.
\]

Unfortunately, this modification is not strong enough to give a constant strictly smaller than $1-1/e$. The reason is that the function $f$ may have a steep drop in the neighborhood of $0$, which leads to negligible $b_i$'s in the above construction.

On the other hand, the failure of the analysis comes from our coarse and brutal relaxation by establishing a single upper bound for all $y_0 < \alpha$. For instance, if the function steeply drops at some $\beta \in [0,\alpha]$, then we could resolve the issue by considering two cases of $y_0 < \beta$ or $y_0 \in [\beta, \alpha)$.
Formally, we prove the following lemma that is slightly weaker than \eqref{eq:adwords_special}.

\begin{lemma}
    \label{lem:adwords_general}
    If \prank with perturbation function $f(x)$ achieves a competitive ratio of $\Gamma$ on AdWords, then
    \begin{multline}
        (1-\Gamma) \cdot f(\alpha) \ge (\Gamma - \alpha) \cdot \int_0^\alpha f(x)\d x + (\Gamma-\beta) \cdot \int_{\alpha}^1 \min\left\{f(x),\frac{f(\alpha)}{f(\beta)}f(x-\alpha)\right\} \d x,\\
        \forall \alpha,\beta \in [0,1],\alpha\geq \beta.\label{eq:adwords_general}
    \end{multline}
\end{lemma}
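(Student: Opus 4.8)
\textbf{Proof proposal for \Cref{lem:adwords_general}.}
The plan is to follow the same instance as in \Cref{lem:adwords_special}, but with a refined choice of the bid weights $b_i$ and a finer case analysis on the random rank $y_0$ of the unlimited-budget vertex $u_0$. As before, fix $\alpha,\beta\in[0,1]$ with $\alpha\ge\beta$, take $n$ large, keep the $n+1$ offline vertices $u_0,\dots,u_n$ (with $u_1,\dots,u_n$ having unit budget and ranks $y_i=i/n$ under the same simplifying assumption justified by \Cref{cctlemma}), the $2n$ online vertices, and the same edge structure: $v_1,\dots,v_n$ joined to $u_0$ with bids $b_1,\dots,b_n$, and all $v_1,\dots,v_{2n}$ joined to $u_1,\dots,u_n$ with unit weight. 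The new choice is
\[
  b_i=\begin{cases}\dfrac{f(i/n)}{f(\alpha)}, & i<\alpha n,\\[2ex] \min\!\left\{\dfrac{f(i/n)}{f(\alpha)},\ \dfrac{f(i/n-\alpha)}{f(\beta)}\right\}, & i\ge \alpha n,\end{cases}
\]
which is still non-increasing in $i$ (both branches are non-increasing and at $i=\alpha n$ the first branch gives $1$ while the second is at most $f(0)/f(\beta)$, so one checks continuity/monotonicity at the junction holds since $f(\alpha)\le f(\beta)$).

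The key structural claim to establish is: \emph{if $y_0\in[\beta,\alpha)$ then at most $\alpha n$ of the vertices $v_1,\dots,v_n$ are matched to $u_0$}. The argument mirrors the special case: suppose more than $\alpha n$ earlier vertices have gone to $u_0$ when $v_i$ arrives; then $u_{i-\alpha n}$ is still available (it can only have received $v$'s with index $>i-\alpha n$, i.e.\ fewer than $\alpha n+1$ of them — here one uses that $u_1,\dots,u_n$ are filled in order of increasing rank, hence in order of increasing index, exactly as in the triangle analysis), and its perturbed weight is $f(i/n-\alpha)$, whereas the perturbed weight of $(u_0,v_i)$ is $b_i f(y_0)\le \frac{f(i/n-\alpha)}{f(\beta)}\cdot f(y_0)\le \frac{f(i/n-\alpha)}{f(\beta)}\cdot f(\beta)=f(i/n-\alpha)$, using $b_i\le \frac{f(i/n-\alpha)}{f(\beta)}$ for $i\ge\alpha n$ and monotonicity $f(y_0)\le f(\beta)$. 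So \prank prefers $u_{i-\alpha n}$, contradiction. This bounds $\alg(y_0)\le \sum_{i\le \alpha n} b_i + n$ for $y_0\in[\beta,\alpha)$; and for $y_0\ge \beta$ more generally one similarly gets the same bound, while for $y_0<\beta$ we use only the trivial bound $\alg(y_0)\le \sum_{i\le n} b_i + n$. Then compute: $\opt = n + \sum_i b_i$; and
\[
  \E[\alg] \le \beta\Big(\sum_{i\le n}b_i + n\Big) + (\alpha-\beta)\Big(\sum_{i\le \alpha n}b_i + n\Big) + (1-\alpha)\, n,
\]
since $y_0$ is uniform on $[0,1]$ and $\alg(y_0)\le n$ when $y_0\ge\alpha$ (same as Case 1 of \Cref{lem:adwords_special}). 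Converting the Riemann sums to integrals as $n\to\infty$, using $\sum_{i\le \alpha n}b_i/n\to \frac1{f(\alpha)}\int_0^\alpha f$ and $\sum_{i\le n}b_i/n\to \frac1{f(\alpha)}\int_0^\alpha f + \int_\alpha^1 \min\{f(x),\frac{f(\alpha)}{f(\beta)}f(x-\alpha)\}\dx$, and imposing $\Gamma\le \E[\alg]/\opt$, the inequality rearranges (exactly as the display at the end of \Cref{lem:adwords_special}) into \eqref{eq:adwords_general}.

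I expect the main obstacle to be the structural claim and, relatedly, pinning down $\E[\alg]$ cleanly: one must argue carefully that when $y_0\in[\beta,\alpha)$ the vertices $u_1,\dots,u_n$ are consumed in increasing index order (so the ``displaced'' vertex $u_{i-\alpha n}$ is indeed the relevant still-available one), and that the bid sequence $b_i$ is genuinely non-increasing across the junction at $i=\alpha n$ so that ``the first $\alpha n$ of them'' are the most valuable — both are the places where the fractional nature of \prank and the log-concavity substitute of the special case must be handled by the new min-structure of $b_i$ instead. The remaining computations (Riemann-sum limits, the algebraic rearrangement, the $o(1)$ bookkeeping, and removing the $y_i=i/n$ assumption via \Cref{cctlemma} exactly as in \Cref{lem:adwords_special}) are routine. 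Finally, \Cref{thm:general} follows by combining \eqref{eq:adwords_general} with \eqref{eq:variables_separable}: the former constrains $f$ near $0$ (where a steep drop would have defeated the single-case bound), the latter from \Cref{lem:vertex_weight} constrains it globally, and together they are infeasible for $\Gamma=1-1/e$, yielding the quantitative gap $1-1/e-0.0003$ after choosing suitable numerical values of $\alpha,\beta$.
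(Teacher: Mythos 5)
Your proposal is correct and matches the paper's proof essentially step for step: the same modified bids $b_i$ with the $\min\{f(i/n)/f(\alpha),\, f(i/n-\alpha)/f(\beta)\}$ branch, the same three-way case split on $y_0$ (trivial bound for $y_0<\beta$, at most $\alpha n$ vertices to $u_0$ for $y_0\in[\beta,\alpha)$ via $b_i f(y_0)\le f(i/n-\alpha)$, and $\alg=n$ for $y_0\ge\alpha$), and the same expectation computation and Riemann-sum limit. The extra care you take with monotonicity of $b_i$ across the junction and the availability of $u_{i-\alpha n}$ fills in details the paper leaves implicit, but it is the same argument.
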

\begin{proof}
We apply the same construction as in \Cref{lem:adwords_special} and make the same assumption that $y_i = \frac{i}{n}$ for the simplicity of presentation. We modify the instance by setting the weights $b_i$ differently:
\[
\forall 1\le i\le n, \quad b_i=\left\{\begin{array}{lcl}\frac{1}{f(\alpha)}\cdot f\left(\frac{i}{n}\right)&&{i<\alpha n,}\\\min\left\{\frac{1}{f(\alpha)}\cdot f\left(\frac{i}{n}\right),\frac{1}{f(\beta)}\cdot f\left(\frac{i}{n}-\alpha\right)\right\}&&{i\geq \alpha n.}\end{array} \right.
\]

The optimal solution equals
\begin{equation}
\label{eq:general_opt}
\opt = n+\sum_{i=1}^{n}b_i=n\cdot\left(1+\frac{1}{f(\alpha)}\int_{0}^{\alpha}f(x)\d  x+
    \int_{\alpha}^{1}\min\left\{\frac{f(x)}{f(\alpha)},\frac{f\left(x-\alpha\right)}{f(\beta)} \right\}\d  x+o(1)\right).
\end{equation}

Next, we consider the performance of \prank depending on the value of $y_0$.
\subparagraph*{Case 1. ($y_0<\beta$)} We use $\opt$ as a trivial upper bound of $\alg $, i.e., $\alg(y_0) \leq \opt$.
\subparagraph*{Case 2. ($y_0 \geq \alpha$)} For each online vertex $v_i$, the perturbed weight of $(u_0,v_i)$ is 
\[
b_i\cdot f(y_0) \le \frac{f(y_0)}{f(\alpha)}\cdot f\left(\frac{i}{n}\right) \le f \left(\frac{i}{n}\right),
\]
while the perturbed weight of $(u_i,v_i)$ is $f(y_i) = f\left(\frac{i}{n}\right)$. Therefore, \prank matches $(u_i,v_i)$ for all $ 1\le i \le n$ and we have $\alg(y_0) =n$.

\subparagraph*{Case 3. ($y_0\in [\beta,\alpha)$)} We prove that the number of vertices matched to $u_0$ is no more than $\alpha n$. This can be similarly argued as follows. For each online vertex $i\ (i>\alpha n)$, suppose the number of vertices matched to $u_0$ is already $\alpha n$, the perturbed weight for $u_0$ is $f(y_0) \cdot b_i\leq \frac{f(y_0)}{f(\beta)}\cdot f\left(\frac{i}{n}-\alpha\right)$, while the perturbed weight for $u_{i-\alpha n}$ is $f\left(\frac{i}{n}-\alpha\right)$. So that $f\left(\frac{i}{n}-\alpha\right)\geq f(y_0) \cdot b_i$ and thus $v_i$ will not choose $u_0$ again. Therefore, as the number of vertices matched to $u_0$ is no more than $\alpha n$, we have
\[
\alg \leq n+\sum_{i=1}^{\alpha n}b_i=n\cdot\left(1+\frac{1}{f(\alpha)}\int_{0}^{\alpha}f(x)\d  x+o(1)\right).
\]

Taking expectation over the randomness of $y_0$, we conclude that
\begin{equation}
\label{eq:general_alg}
\mathbb E[\alg ]\leq n\cdot\left(\frac{\alpha-\beta}{f(\alpha)}\int_{0}^{\alpha}f(x)\d  x+1-\beta+o(1)\right)+\beta\cdot \opt .
\end{equation}

Finally, we conclude the proof by plugging in \eqref{eq:general_opt} and \eqref{eq:general_alg} to $\mathbb E[\alg ]\geq \Gamma\cdot \opt $ and letting $n$ goes to infinity.
\end{proof}

Recall that the vertex-weighted online bipartite matching problem is a special case of AdWords. We conclude the proof of \Cref{thm:general} by \Cref{lem:vertex_weight}, \ref{lem:adwords_general} and the following mathematical fact. We provide the proof of the following lemma in \Cref{appendix:mathematical}.
\begin{restatable}{lemma}{lemratio}
  \label{lem:mathematical_fact}
  If a perturbation function $f$ and $\Gamma > 0$ satisfy the following conditions:
  \begin{gather*}
    \left(\alpha+1-e^{\alpha-1} - \Gamma \right) \cdot f(\beta)\ge \left( 1-(1-\Gamma) \cdot e^{\beta} \right) \cdot \int_0^\alpha f(x) \dx, \quad \forall \alpha,\beta \in [0,1], \numberthis \label{ljxeq1} \\
    \begin{aligned}
    (1-\Gamma) f(\alpha) \ge (\Gamma - \alpha)  \int_0^\alpha f(x) \dx + (\Gamma - \beta) \int_{\alpha}^1 \min \left\{f(x),\frac{f(\alpha)}{f(\beta)}f(x-\alpha)\right\} \dx,& \\
    \forall \alpha, \beta \in [0,1],\alpha \ge \beta,&
    \end{aligned}
    \numberthis \label{ljxeq2}
  \end{gather*}
then $\Gamma < 1 - 1/e - 0.0003$.
\end{restatable}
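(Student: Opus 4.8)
The plan is to derive a contradiction by assuming $\Gamma \ge 1 - 1/e - 0.0003$ and extracting enough structural information about $f$ from the two inequalities to violate one of them. The key observation is that the first inequality (the vertex-weighted constraint) is, as already shown in the proof of \Cref{thm:vertexweight}, a very rigid pointwise-vs-integral constraint: setting $M \eqdef \inf_{\alpha \in (0,1)} f(\alpha)/(1-e^{\alpha-1})$, it forces $f(\alpha) \ge \tilde M(1-e^{\alpha-1})$ pointwise and $\int_0^\beta f \le \tilde M(\beta - e^{\beta-1} + e^{-1})$ for a value $\tilde M$ that, when $\Gamma$ is only slightly below $1-1/e$, is \emph{close to but not exactly} $M$. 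So first I would run the same argument as in \Cref{thm:vertexweight} but carrying the slack: from the first displayed hypothesis with $\Gamma = 1 - 1/e - \delta$ (think of $\delta \le 0.0003$), one gets that $f$ is \emph{quantitatively close} to a scaled copy of $1-e^{x-1}$ — specifically $f(x) \ge M(1 - (1-\Gamma)e^x)$ after normalizing, and $\int_0^1 f \le $ (something only slightly above $M/e$). Without loss of generality normalize so that the relevant scale factor is $1$.

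Next I would plug this near-canonical shape into the second hypothesis and specialize the free parameters. The natural choice mirrors the corollary: take $\beta$ small (so that $f(\alpha)/f(\beta)$ is close to $f(\alpha)/f(0^+)$, i.e.\ not much of a rescaling) and take $\alpha$ around $0.1$, where the canonical-function computation already gave $\le 0.624 < 1-1/e$ with room to spare. The point is that the $\min\{f(x), \frac{f(\alpha)}{f(\beta)}f(x-\alpha)\}$ term in hypothesis two is, for the canonical function, just $\frac{f(\alpha)}{f(\beta)}f(x-\alpha)$ on $[\alpha,1]$ because $1-e^{x-1}$ is log-concave, and for an $f$ that is close to canonical this min is still governed by the log-concave branch up to a controlled error. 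So the strategy is: (i) bound $\int_{\alpha}^1 \min\{\cdots\}$ from below using the pointwise lower bound $f \ge M(1 - (1-\Gamma)e^x)$ on the canonical-looking branch, inflating constants only by $O(\delta)$; (ii) bound $\int_0^\alpha f$ from below the same way; (iii) bound $f(\alpha)$ from above using the integral constraint of hypothesis one applied near $\alpha$ together with monotonicity (or directly $f(\alpha) \le M(1-e^{\alpha-1}) + O(\delta)$-type bounds that follow from combining the pointwise and integral constraints). Substituting these into hypothesis two and rearranging yields $\Gamma \le g(\alpha,\beta) + O(\delta)$ for an explicit elementary function $g$; choosing $\alpha = 0.1$, $\beta \to 0$ gives $g(\alpha,\beta)$ strictly below $1-1/e$ by a margin exceeding $0.0003 + O(\delta)$, the desired contradiction. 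This is essentially a robustified, quantitative version of stitching the $0.624$ corollary onto the uniqueness theorem, with all the $o(1)$'s and approximations made effective.

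The main obstacle I anticipate is controlling the $\frac{f(\alpha)}{f(\beta)}$ factor and the $\min$ uniformly. If $f$ has a steep drop near $0$ — exactly the pathology flagged in the discussion before \Cref{lem:adwords_general} — then $f(\beta)$ for small $\beta$ need not be close to $f(0) = 1$, and the rescaling $\frac{f(\alpha)}{f(\beta)}$ could be large, which \emph{weakens} the lower bound on the $\int_{\alpha}^1\min\{\cdots\}$ term. The resolution is precisely why hypothesis one is needed: the pointwise bound $f(\beta) \ge M(1-e^{\beta-1}) = M(1-1/e) + O(\beta)$ rules out a steep drop — $f$ cannot fall below the canonical curve — so $f(\beta)$ is bounded \emph{below} by a constant near $1-1/e$ for small $\beta$, and the min is controlled. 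One must also be careful that $f(\alpha)$ is bounded \emph{above}, which again comes from hypothesis one (the integral constraint forces $f$ not to exceed the canonical curve by much, via the right-continuity/averaging trick in \Cref{thm:vertexweight}). So the real work is a careful two-sided sandwiching of $f$ near $0$ and near $\alpha$ by the canonical curve with $O(\delta)$ error bars, then a finite elementary optimization over $(\alpha,\beta)$ — routine calculus but requiring honest constant-tracking to land below $1-1/e-0.0003$. I would verify the final numeric inequality at $\alpha = 0.1$, $\beta = 0$ (and perhaps a small neighborhood) to confirm the margin.
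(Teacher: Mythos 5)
Your high-level architecture (stitch the $0.624$-style computation onto the uniqueness theorem, with quantitative error bars) matches the paper's intent, but the proposal as written has a gap that is fatal with these constants: hypothesis one does \emph{not} give a pointwise upper bound on $f$ of the quality you need at a single $\alpha$. Normalizing $M=1$, the two usable consequences of hypothesis one are the pointwise \emph{lower} bound $f(x)\ge 1-(1-\Gamma)e^{x}$ and the \emph{integral} upper bound $\int_0^\alpha f(x)\dx\le \alpha-e^{\alpha-1}+e^{-1}+\gamma$ (writing $\gamma=1-1/e-\Gamma$); there is no pointwise upper bound. The best you can extract at $\alpha=0.1$ is $f(0.1)\le\frac{1}{0.1}\int_0^{0.1}f\le 0.616$ via monotonicity (and even a cleverer averaging of the excess $f-\ell$ against the lower-bound curve $\ell$ only yields $f(\alpha)-\ell(\alpha)=O(\sqrt{\gamma})\approx 0.02$), whereas the canonical value is $0.593$. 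Since the ratio $\frac{\alpha\int_0^\alpha f+f(\alpha)}{\int_0^1 f+f(\alpha)}$ has sensitivity $\approx 0.39$ to $f(\alpha)$, an excess of $0.02$ moves the bound by $\approx 0.008$, which is the entire margin of the $0.624$ corollary below $1-1/e$. No single-point evaluation can close this. The paper's proof resolves it structurally: it uses hypothesis two to derive a new pointwise \emph{lower} bound $f(\alpha)\ge h(\alpha)$ for an explicit $f$-independent $h$, valid for \emph{all} $\alpha\in(0,r)$ with $r=-\ln(1-\gamma)$, and then integrates $\max\{h(\alpha),1-(1-\Gamma)e^{\alpha}\}$ over $(0,r)$ to contradict $\int_0^r f\le r-e^{r-1}+e^{-1}+\gamma$. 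The contradiction is extracted in integrated form precisely because that is the only form in which the upper control on $f$ exists.

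Your treatment of the $f(\alpha)/f(\beta)$ factor also points the wrong way. In $\min\{f(x),\frac{f(\alpha)}{f(\beta)}f(x-\alpha)\}$ a \emph{large} ratio makes the min larger and hence \emph{strengthens} the constraint; the dangerous case is a \emph{small} ratio, i.e.\ $f(\beta)$ staying large. So you need a \emph{lower} bound on $f(\alpha)/f(\beta)$, equivalently an \emph{upper} bound on $f(\beta)$ — and the pointwise lower bound on $f(\beta)$ you invoke controls the wrong side. Taking $\beta\to 0$ does not help: the constraints permit $f\equiv 1$ on $[0,e\gamma]\approx[0,0.0008]$ before dropping to the canonical curve, in which case $f(\alpha)/f(0^+)=f(\alpha)$ rather than $f(\alpha)/(1-1/e)$, degrading the coefficient of the min term by over $30\%$. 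The paper's missing idea is an \emph{adaptive}, $f$-dependent choice $\beta=\inf\{x\in[0,\alpha]:\frac{f(x)}{1-e^{x}(1-\Gamma)}\le(1+\delta)\frac{f(\alpha)}{1-e^{\alpha}(1-\Gamma)}\}$, which by construction gives $\frac{f(\alpha)}{f(\beta)}\ge\frac{1-e^{\alpha}(1-\Gamma)}{(1+\delta)\Gamma}$, and for which the integral constraint (applied to the interval $(0,\beta)$ where $f$ exceeds $(1+\delta)(1-e^{x}(1-\Gamma))$) forces $\beta\le\beta^*\approx 0.0096$. Without both the adaptive $\beta$ and the integration over $\alpha$, the constant-tracking you describe as ``routine calculus'' does not land below $1-1/e-0.0003$.
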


\section{Online Budget-Additive Welfare Maximization}
\label{sec:budget_additive}
This section introduces the $(1-1/e)$-competitive fractional algorithm for online budget-additive welfare maximization. Compared to online submodular welfare maximization, the model restricts the utility function to be a budget-additive function. Compared to the AdWords with unknown budget setting, we can know the information about the budget earlier. (i.e., we know $u$'s budget when the sum of released bid w.r.t. $u$ exceeds its budget.) We study the fractional version as the definition of fractional AdWords.

The fractional algorithm consists of two steps: At first, we introduce an idea called vertex decomposition. We will decompose offline vertices into different stages. Then, we will simulate \pbalance (a.k.a. MSVV~\cite{jacm/MehtaSVV07}) on the virtual graph (after decomposition) we construct. Let us first define the model and introduce the notations.

\subparagraph*{The model}
Given an underlying graph $G=(L\cup R)$, each offline vertex $u\in L$ is associated with an underlying budget $B_u$ and a non-negative marginal gain $w_{uv}$ with each online vertex $v\in R$. The marginal gain $w_{uv}$ is released with $v$'s arrival, and the budget of $u$ is observed by the algorithm at the first time when $\sum_{v\in V'} w_{uv} > B_u$ where $V'$ is the set of already released online vertices. The algorithm can allocate $v$ to some $u$ (i.e., let $x_{uv} = 1$) at $v$'s arrival irrevocably. In the fractional version, we allow $x_{uv}$ to be a fractional value while keeping $\sum_{u\in L} x_{uv} \leq 1$. Finally, the total gain of an offline vertex $u$ equals to $\min\{\sum_{v \in R} w_{uv}x_{uv}, B_u\}$, and we aim to maximize the total welfare of all offline vertices.

Then, we introduce the vertex decomposition. 

\subparagraph*{Vertex Decomposition}
Given an original input instance $(G=(L\cup R),B,w)$, we will construct a new instance $(G'=(L' \cup R), B', w')$ in an online fashion. We sort the online vertices by their releasing time and let $v_i$ denote the $i$-th released online vertex. At the arrival time of an online vertex $v_i$, if the budget of any offline vertex $u$ is not used up, we will create a new stage $u_i$ of $u$ into $L'$. Let $v_m$ be the online vertex whose arrival is the first time we observe $B_u$. We will totally decompose $u$ into $m$ stages in $L'$. The budget and the marginal weight are defined as follows:
\[
  B'_{u_i} = 
  \begin{cases}
  w_{uv_i} & i\leq m-1 \\
  B_u - \sum_{j=1}^{m-1} w_{uv_j} & i = m
  \end{cases}
  \qquad
  w'_{u_jv_i} =
  \begin{cases}
    w_{uv_i} & j \leq i \\
    0        & j>i.
  \end{cases}
\]
Remark that we only have non-zero marginal gains between $v_i$ and already constructed stages of offline vertices (i.e., $u_j$ such that $j\leq i$).

\subparagraph*{The Optimal Fractional Algorithm}

Then, we are ready to define the fractional algorithm. At the arrival time of an online vertex $v$, let $x_{u_i}$ be the allocated portion of stage $i$ of $u$. We check all existed and unsaturated stages of all offline vertices and continuously allocate $v$ to one stage of an offline vertex with the maximized perturbed weight, i.e., $\arg\min_{u_i} f(x_{u_i}/B_{u_i}) \cdot w_{u_iv}$. Whenever a stage of $u$ has been chosen, we will simultaneously increase $x_{uv}$ and $x_{u_i}$, i.e., we virtually match $v$ with $u_i$ and actually match $v$ with $u$.

To prove the competitive ratio, we first show that the two instances share the same offline optimal fractional solution.
\begin{lemma}
  \label{lem:budet-additive-opt}
  Let $\textup{OPT}(G)$ and $\textup{OPT}(G')$ be the offline optimal fractional solution of $(G,B,w)$ and $(G',B',w')$ respectively. We have
  \[
    \textup{OPT}(G) = \textup{OPT}(G').
  \]
\end{lemma}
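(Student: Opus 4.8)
The plan is to establish the equality by exhibiting two inequalities, each realized by a concrete transformation of fractional solutions. The key structural observation is that the stages $u_1,\dots,u_m$ of a vertex $u$ partition the budget $B_u$ exactly, i.e. $\sum_{i=1}^m B'_{u_i} = B_u$, and that the marginal weights are "downward compatible": $w'_{u_j v_i} = w_{u v_i}$ whenever $j \le i$ and $0$ otherwise, so any online vertex $v_i$ sees exactly the same weight $w_{u v_i}$ towards every stage $u_j$ with $j \le i$ that it is allowed to use.

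\medskip
\noindent\textbf{Direction $\textup{OPT}(G') \ge \textup{OPT}(G)$.} First I would take an optimal fractional solution $x$ for $G$ and "unfold" it into a solution $x'$ for $G'$. Given the allocation $x_{u v_i}$ of vertex $v_i$ to $u$, I distribute this mass among the stages $u_1,\dots,u_{\min\{i,m\}}$ of $u$. The cleanest way is to process the online vertices in arrival order and greedily fill the stages of $u$ in increasing index order: think of the stages as bins of capacity $B'_{u_1}, B'_{u_2}, \dots$, and pour the stream of masses $w_{uv_1}x_{uv_1}, w_{uv_2}x_{uv_2}, \dots$ into them, always filling the lowest-index unsaturated stage that is already "available" at time $i$ (i.e. $j \le i$). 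Because the bin capacities for the first $m-1$ stages are exactly $w_{uv_1},\dots,w_{uv_{m-1}}$ and stage $u_j$ becomes available exactly at time $j$, the $j$-th unit of weight always has a legal bin available when it arrives, so feasibility on the online side ($\sum_{u'}x'_{u' v} \le 1$, inherited coordinatewise) and on the stage-capacity side is maintained. The total weight routed to the stages of $u$ equals $\min\{\sum_i w_{uv_i}x_{uv_i}, B_u\}$ — the cap $B_u = \sum_j B'_{u_j}$ is enforced automatically since beyond that all stages are saturated — which is exactly $u$'s contribution to $\textup{OPT}(G)$. Summing over $u$ gives a feasible $x'$ of the same value.

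\medskip
\noindent\textbf{Direction $\textup{OPT}(G) \ge \textup{OPT}(G')$.} Conversely, take an optimal $x'$ for $G'$ and collapse the stages: set $x_{u v_i} = \sum_{j \le \min\{i,m\}} x'_{u_j v_i}$. Online feasibility is immediate by summing. For the objective, the contribution of $u$ in $G'$ is $\sum_j \min\{\sum_i w'_{u_j v_i} x'_{u_j v_i}, B'_{u_j}\} \le \sum_j \min\{\dots\}$, and I need the total over stages to be at most $\min\{\sum_i w_{uv_i} x_{uv_i}, B_u\}$; the first bound $\sum_i w_{uv_i}x_{uv_i}$ follows since $w'_{u_j v_i} \le w_{uv_i}$ and the double sum rearranges, while the budget bound follows from $\sum_j B'_{u_j} = B_u$. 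Hence $\textup{OPT}(G) \ge \textup{OPT}(G')$.

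\medskip
\noindent\textbf{Main obstacle.} The delicate point is the first direction: I must verify that the greedy "pour into lowest available bin" procedure never gets stuck, i.e. that at the time mass $w_{uv_i}x_{uv_i}$ needs to be placed, the already-available stages $u_1,\dots,u_i$ have enough remaining capacity. Since by time $i$ at most $w_{uv_1}+\dots+w_{uv_{i-1}}$ of weight has been released towards $u$ (the actual allocated amount $\sum_{j<i} w_{uv_j}x_{uv_j}$ is even smaller), and the first $i-1$ stages alone have capacity $w_{uv_1}+\dots+w_{uv_{i-1}}$ (for $i \le m$; for $i \ge m$ the stage $u_m$ absorbs the remainder up to $B_u$), there is always room. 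I would state this as a short invariant and push it through by induction on $i$; everything else is bookkeeping.
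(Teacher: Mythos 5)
Your proposal is correct and follows essentially the same route as the paper: both directions are obtained by explicit transformations of fractional solutions, using $\sum_{j} B'_{u_j}=B_u$ and $w'_{u_j v_i}\le w_{u v_i}$ for the collapsing direction. The only difference is in the unfolding direction, where the paper sidesteps your water-filling invariant for $i<m$ by placing $v_i$'s mass entirely into stage $u_i$ (feasible since $B'_{u_i}=w_{uv_i}\ge w_{uv_i}x_{uv_i}$) and only pours into unsaturated stages for $i\ge m$; your uniform greedy filling with the capacity invariant works equally well.
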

\begin{proof}
  For any allocation $x'$ of $(G',B',w')$, fixing an offline vertex $u$ with $m$ stages, we let $x_{uv} = \sum_{i=1}^m x'_{u'v}$. Notice that, because $B_u = \sum_{i=1}^{m} B_{u_i}$ and $w_{uv} \geq w_{u_iv}$,
  \[
    \sum_{i=1}^{m} \min \left\{\sum_{v\in V} w'_{u_iv}x'_{u_iv},B'_{u_i}\right\} \leq \min \left\{\sum_{v\in V} w_{uv}x_{uv},B_u\right\}.
  \]
  Thus, we have $\textup{OPT}(G') \leq \textup{OPT}(G)$.

  On the other hand, for any allocation $x$ of $(G,B,w)$, we also consider an offline vertex $u$ with $m$ stages. For each offline vertex $u$, we simulate the gain of $x$ in $(G,B,w)$ by $x'$ in $(G',B',w')$. For any online vertex $v_i$ with $i<m$, we construct $x'_{u_iv_i} = x_{uv_i}$. Notice that, none of $u_i$ exceeds its budget now, and we totally get the gain and use the budget of $\sum_{i=1}^{m-1} w'_{u_iv_i} x'_{u_iv_i}$, which is the same as that in $x$ (i.e., $\sum_{i=1}^{m-1} w_{uv_i} x_{uv_i}$). For all the other online vertices $v_i$ with $i\geq m$, they are connected to all $u_1$ to $u_m$ with the same marginal gain $w_{uv_i}$. We can simulate $x_{uv_i}$ by keep finding an unsaturated stage $j$ of $u$ and increase $x'_{u_jv_i}$ until $u$'s budget is totally used up. Therefore, we have:
  \[
    \sum_{i=1}^{m} \min \left\{\sum_{v\in V} w'_{u_iv}x'_{u_iv},B'_{u_i}\right\} = \min \left\{\sum_{v\in V} w_{uv}x_{uv},B_u\right\}.
  \]
  It concludes $\textup{OPT}(G') \geq \textup{OPT}(G)$, so the lemma is proved.
\end{proof}

By the result of Mehta et al \cite{jacm/MehtaSVV07}, we have:
\begin{lemma}[\cite{jacm/MehtaSVV07}]
  If we run \textup{MSVV} on $(G',B',w')$, we have
  \[
    \textup{MSVV}(G') \geq (1-1/e)\textup{OPT}(G').
  \]
\end{lemma}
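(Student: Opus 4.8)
The statement is exactly the classical $(1-1/e)$-competitiveness of \pbalance (MSVV) for fractional AdWords due to Mehta et al.~\cite{jacm/MehtaSVV07}, now invoked on the decomposed instance $(G',B',w')$. The first step is to check that $(G',B',w')$ is a legal input for MSVV: each stage $u_i$ is an offline vertex with a fixed budget $B'_{u_i}$, and that budget is determined by data revealed no later than the arrival of $v_i$ — the weights $w_{uv_j}$ for $j\le i$ and, when $i=m$, the value $B_u$, which is observed exactly at $v_m$'s arrival — while $v_i$ is adjacent only to the already-created stages $u_j$ with $j\le i$. Hence the fractional algorithm of Section~\ref{sec:budget_additive} is well-defined on $G'$ and is literally MSVV run on the AdWords instance $G'$, so the lemma follows from the cited theorem. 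For completeness I would include its standard proof via the primal--dual method of Devanur, Jain, and Kleinberg~\cite{soda/DevanurJK13}.

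To set this up, write the LP relaxation of fractional AdWords on $G'$: the primal maximizes $\sum_{u_i,v} w'_{u_iv} x_{u_iv}$ subject to $\sum_v w'_{u_iv} x_{u_iv}\le B'_{u_i}$ for every stage, $\sum_{u_i} x_{u_iv}\le 1$ for every online vertex $v$, and $x\ge 0$; its dual has a variable $\alpha_{u_i}\ge 0$ per stage and $\gamma_v\ge 0$ per online vertex, with a constraint $w'_{u_iv}\,\alpha_{u_i}+\gamma_v\ge w'_{u_iv}$ for every edge. Since $\opt(G')$ is at most the primal optimum, which by weak duality is at most the value of any feasible dual, it suffices to construct online a feasible dual of value at most $\tfrac{1}{1-1/e}\cdot\textup{MSVV}(G')$. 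The plan is to grow the dual along the MSVV trajectory: writing $f(x)=1-e^{x-1}$, whenever an infinitesimal amount $\d x$ of the current online vertex $v$ is routed to a stage $u_i$ at budget-fraction $\theta=x_{u_i}/B'_{u_i}$ — so that $f(\theta)w'_{u_iv}$ is, at that instant, the maximum perturbed weight among available stages — split the gained weight $w'_{u_iv}\,\d x$ by letting $B'_{u_i}\,\d\alpha_{u_i}=\tfrac{1}{1-1/e}\bigl(1-f(\theta)\bigr)w'_{u_iv}\,\d x$ and $\d\gamma_v=\tfrac{1}{1-1/e} f(\theta)w'_{u_iv}\,\d x$. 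By construction the final dual value equals $\tfrac{1}{1-1/e}$ times the total weight matched by MSVV, i.e.\ $\tfrac{1}{1-1/e}\,\textup{MSVV}(G')$, so only dual feasibility remains.

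For feasibility, fix an edge $(u_i,v)$ and let $\theta^*$ be the final budget-fraction of $u_i$. If $u_i$ ends saturated ($\theta^*=1$) then $\alpha_{u_i}=\tfrac{1}{1-1/e}\int_0^1(1-f(\theta))\,\d\theta=\tfrac{1}{1-1/e}\int_0^1 e^{\theta-1}\,\d\theta=1$, and the constraint holds. If $u_i$ ends unsaturated, then $u_i$ was available throughout the processing of $v$, so $v$ must have been matched in full (otherwise MSVV would keep routing mass into the unsaturated $u_i$); hence $\gamma_v$ accrues a whole unit of ``threshold value'', and by the water-filling description of MSVV together with the monotonicity of each stage's used-budget fraction (which only increases, so $f$ only decreases over time), the terminal threshold of $v$ is at least $f(\theta^*)w'_{u_iv}$, giving $\gamma_v\ge\tfrac{1}{1-1/e} f(\theta^*)w'_{u_iv}$. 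Then $w'_{u_iv}\alpha_{u_i}+\gamma_v\ge\tfrac{w'_{u_iv}}{1-1/e}\bigl(\int_0^{\theta^*}(1-f(\theta))\,\d\theta+f(\theta^*)\bigr)$, and I would check that the bracket equals $\int_0^{\theta^*} e^{\theta-1}\,\d\theta+1-e^{\theta^*-1}=1-1/e$, independent of $\theta^*$, so the right-hand side is exactly $w'_{u_iv}$ and the dual is feasible. Weak duality then yields $\opt(G')\le\tfrac{1}{1-1/e}\textup{MSVV}(G')$, as claimed. (The constancy of $\int_0^\theta(1-f)+f(\theta)$ is precisely the relation $f'=f-1,\ f(1)=0$ that pins down the canonical perturbation function, mirroring \Cref{thm:vertexweight}.)

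The delicate part is the feasibility check in the unsaturated case, which rests on two structural facts about MSVV's continuous process — that each stage's used-budget fraction is nondecreasing in time, and that MSVV behaves as a water-filling so that an online vertex's terminal threshold lower-bounds the perturbed weight of any neighbor it leaves unsaturated — together with the observation that having an unsaturated neighbor forces the online vertex to be completely matched, so that its dual variable integrates a full unit of threshold. Everything else — the bookkeeping of the gain-split and the constancy of $\int_0^\theta(1-f)+f(\theta)$ — is routine once the canonical $f$ is in hand.
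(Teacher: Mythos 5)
Your proof is correct: the paper states this lemma as a black-box citation of Mehta et al.\ and gives no proof, and your primal--dual argument is exactly the standard Devanur--Jain--Kleinberg proof of that cited theorem, with the feasibility check resting correctly on the identity $\int_0^{\theta^*}(1-f(\theta))\,\d\theta+f(\theta^*)=1-1/e$ for $f(\theta)=1-e^{\theta-1}$ and on the monotonicity of each stage's spent-budget fraction. The only remark is organizational: your opening paragraph about $G'$ being constructible online and the algorithm coinciding with MSVV on $G'$ is really the content of the subsequent corollary in the paper, not of this lemma, which concerns MSVV run on $G'$ as an ordinary offline-given AdWords instance.
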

Then, we prove that running the algorithm on $(G,B,w)$ is the same as running MSVV on $(G',B',w')$, so that we have the following corollary.
\begin{corollary}
  Let \textup{ALG} be the gain collected by the fractional online algorithm,
  \[
    \textup{ALG}(G) \geq (1-1/e)\textup{OPT}(G).
  \]
\end{corollary}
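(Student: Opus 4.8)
The plan is to establish an exact simulation between the online fractional algorithm described for $(G,B,w)$ and MSVV (i.e., \pbalance) run on the virtual instance $(G',B',w')$ produced by the vertex decomposition, and then chain the two lemmas already proved. Concretely: if we can show that the allocation $x$ produced by the online algorithm on $G$ satisfies $\textup{ALG}(G) = \textup{MSVV}(G')$ (for the particular MSVV run that the algorithm simulates), then combining with Lemma~\ref{lem:budet-additive-opt} ($\opt(G') = \opt(G)$) and the MSVV guarantee ($\textup{MSVV}(G') \ge (1-1/e)\opt(G')$) immediately gives $\textup{ALG}(G) \ge (1-1/e)\opt(G)$, which is the corollary, and hence Theorem~\ref{thm:budgetadditive}.

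The core of the argument is the simulation claim. First I would observe that the virtual instance $G'$ is revealed online exactly as the algorithm processes $G$: when $v_i$ arrives, every offline vertex $u$ whose budget is not yet exhausted contributes a fresh stage $u_i$ with budget $B'_{u_i} = w_{uv_i}$ (or the residual budget if $v_i$ is the revealing vertex $v_m$), and $v_i$ is connected with weight $w_{uv_i}$ to all already-created stages $u_1,\dots,u_i$. Next I would argue by induction on the arrival sequence that at the moment $v$ arrives, the virtual state $\{x_{u_i}/B'_{u_i}\}$ seen by the simulated MSVV coincides with the state the real algorithm uses to compute its perturbed weights: the algorithm, by construction, at each instant continuously pushes mass to the stage $u_i$ maximizing $f(x_{u_i}/B'_{u_i}) \cdot w_{u_i v}$ among existing unsaturated stages, which is precisely MSVV's rule on $G'$, and it simultaneously increments $x_{uv}$ and $x_{u_i}$ so that the real gain on edge $(u,v)$ equals the virtual gain on edge $(u_i,v)$. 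Since every stage $u_i$ has $B'_{u_i} \le w_{uv_i} \le w_{uv_j}$ for the weight $w_{u_i v_j}$ with $j \ge i$, saturation of a stage is well-defined and the fractional constraint $\sum_{u} x_{uv} \le 1$ on $G$ translates to $\sum_{u_i} x_{u_i v} \le 1$ on $G'$; thus the simulated run is a legal execution of MSVV on $G'$. Finally, because for each offline vertex $u$ the total real gain $\min\{\sum_v w_{uv} x_{uv}, B_u\}$ decomposes stage-by-stage — each stage $u_i$ is a budget of exactly the amount of weight it can absorb, and $\sum_i B'_{u_i} = B_u$ — the real welfare equals $\sum_{u_i} \min\{\sum_v w'_{u_i v} x_{u_i v}, B'_{u_i}\} = \textup{MSVV}(G')$.

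The main obstacle I anticipate is the bookkeeping around the \emph{revealing} vertex $v_m$ and the residual-budget stage: before $B_u$ is observed, the algorithm cannot distinguish $u$'s situation from that of an unbounded vertex, and one must check that the stages $u_1,\dots,u_{m-1}$ created with budgets $w_{uv_1},\dots,w_{uv_{m-1}}$ are never over-filled in the virtual run — this follows because stage $u_i$ can only receive mass from $v_i$ (for $j<i$ the weight $w'_{u_i v_j}=0$), and $v_i$ sends at most a total unit weighted by $w_{uv_i} = B'_{u_i}$, so $x_{u_i v_i} \le 1$ and the stage budget is respected automatically; after $v_m$ the later vertices connect to all $m$ stages and MSVV naturally spreads mass, matching the clause ``we will simultaneously increase $x_{uv}$ and $x_{u_i}$.'' A secondary subtlety is that MSVV on $G'$ must be well-defined despite stages appearing online, but since MSVV is a greedy continuous rule that only ever looks at currently-existing unsaturated neighbors, introducing new stages mid-stream is harmless. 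Once these checks are in place the corollary is immediate, and Theorem~\ref{thm:budgetadditive} follows.
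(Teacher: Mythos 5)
Your proposal is correct and follows essentially the same route as the paper: argue that the online algorithm on $(G,B,w)$ is a faithful simulation of MSVV on the online-revealed $(G',B',w')$ (harmless since online vertices only have nonzero weight to already-created stages), then chain $\textup{ALG}(G)\ge \textup{MSVV}(G')\ge(1-1/e)\opt(G')=(1-1/e)\opt(G)$ via the decomposition lemma. One minor quibble: your side remark $B'_{u_i}\le w_{uv_i}\le w_{uv_j}$ for $j\ge i$ is not true in general (the marginal gains need not be monotone), but nothing in the argument actually relies on it — the inequality $\textup{ALG}(G)\ge\textup{MSVV}(G')$ only needs $\sum_i B'_{u_i}=B_u$ and $w'_{u_iv}\le w_{uv}$, which is all the paper uses.
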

\begin{proof}
  Although vertices in $L'$ are constructed over time in our instance, we observe that online vertices only have non-zero marginal gains with already constructed offline stages in $L'$. Thus, MSVV keeps the same performance on $(G',B',w')$ whether $L'$ is given upfront or over time in such an online fashion. Moreover, as the budget of $u$ equals to the sum of budget from $u_1$ to $u_m$ and $w_{u_iv} \leq w_{uv}$, for every fixed offline vertex $u$, the algorithm's gain of $u$ on $(G,B,w)$ is always larger than or equal to MSVV's gain on $u_1$ to $u_m$ in $(G',B',w')$.
  By combining \Cref{lem:budet-additive-opt}, we have
  \[
    \textup{ALG}(G) \geq \textup{MSVV}(G') \geq (1-1/e)\textup{OPT}(G') = (1-1/e)\textup{OPT}(G). \qedhere
  \]
\end{proof}
Finally, the corollary concludes \Cref{thm:budgetadditive}.

\bibliographystyle{plain}
\bibliography{matching.bib}

\begin{thebibliography}{10}

\bibitem{soda/AggarwalGKM11}
Gagan Aggarwal, Gagan Goel, Chinmay Karande, and Aranyak Mehta.
\newblock Online vertex-weighted bipartite matching and single-bid budgeted
  allocations.
\newblock In {\em SODA}, pages 1253--1264, 2011.

\bibitem{sigact/BenjaminM08}
Benjamin Birnbaum and Claire Mathieu.
\newblock On-line bipartite matching made simple.
\newblock {\em ACM SIGACT News}, 39(1):80--87, 2008.

\bibitem{DBLP:conf/focs/BlancC21}
Guy Blanc and Moses Charikar.
\newblock Multiway online correlated selection.
\newblock In {\em 62nd {IEEE} Annual Symposium on Foundations of Computer
  Science, {FOCS} 2021, Denver, CO, USA, February 7-10, 2022}, pages
  1277--1284. {IEEE}, 2021.

\bibitem{DBLP:journals/mp/BuchbinderFFG20}
Niv Buchbinder, Moran Feldman, Yuval Filmus, and Mohit Garg.
\newblock Online submodular maximization: beating 1/2 made simple.
\newblock {\em Math. Program.}, 183(1):149--169, 2020.

\bibitem{siamcomp/ChanCWZ18}
T.{-}H.~Hubert Chan, Fei Chen, Xiaowei Wu, and Zhichao Zhao.
\newblock Ranking on arbitrary graphs: Rematch via continuous linear
  programming.
\newblock {\em {SIAM} J. Comput.}, 47(4):1529--1546, 2018.

\bibitem{DBLP:conf/sigecom/DevenurH09}
Nikhil~R. Devanur and Thomas~P. Hayes.
\newblock The adwords problem: online keyword matching with budgeted bidders
  under random permutations.
\newblock In John Chuang, Lance Fortnow, and Pearl Pu, editors, {\em
  Proceedings 10th {ACM} Conference on Electronic Commerce (EC-2009), Stanford,
  California, USA, July 6--10, 2009}, pages 71--78. {ACM}, 2009.

\bibitem{soda/DevanurJK13}
Nikhil~R. Devanur, Kamal Jain, and Robert~D. Kleinberg.
\newblock Randomized primal-dual analysis of {RANKING} for online bipartite
  matching.
\newblock In {\em {SODA}}, pages 101--107. {SIAM}, 2013.

\bibitem{sosa/EdenFFS21}
Alon Eden, Michal Feldman, Amos Fiat, and Kineret Segal.
\newblock An economics-based analysis of {RANKING} for online bipartite
  matching.
\newblock In {\em {SOSA}}, pages 107--110. {SIAM}, 2021.

\bibitem{focs/FahrbachHTZ20}
Matthew Fahrbach, Zhiyi Huang, Runzhou Tao, and Morteza Zadimoghaddam.
\newblock Edge-weighted online bipartite matching.
\newblock In {\em {FOCS}}, pages 412--423. {IEEE}, 2020.

\bibitem{DBLP:conf/focs/GaoHHNYZ21}
Ruiquan Gao, Zhongtian He, Zhiyi Huang, Zipei Nie, Bijun Yuan, and Yan Zhong.
\newblock Improved online correlated selection.
\newblock In {\em 62nd {IEEE} Annual Symposium on Foundations of Computer
  Science, {FOCS} 2021, Denver, CO, USA, February 7-10, 2022}, pages
  1265--1276. {IEEE}, 2021.

\bibitem{soda/GoelM08}
Gagan Goel and Aranyak Mehta.
\newblock Online budgeted matching in random input models with applications to
  adwords.
\newblock In {\em SODA}, pages 982--991, 2008.

\bibitem{jacm/HuangKTWZZ20}
Zhiyi Huang, Ning Kang, Zhihao~Gavin Tang, Xiaowei Wu, Yuhao Zhang, and Xue
  Zhu.
\newblock Fully online matching.
\newblock {\em J. {ACM}}, 67(3):17:1--17:25, 2020.

\bibitem{soda/HuangPTTWZ19}
Zhiyi Huang, Binghui Peng, Zhihao~Gavin Tang, Runzhou Tao, Xiaowei Wu, and
  Yuhao Zhang.
\newblock Tight competitive ratios of classic matching algorithms in the fully
  online model.
\newblock In {\em SODA}, pages 2875--2886. SIAM, 2019.

\bibitem{DBLP:conf/stoc/HuangS21}
Zhiyi Huang and Xinkai Shu.
\newblock Online stochastic matching, poisson arrivals, and the natural linear
  program.
\newblock In Samir Khuller and Virginia~Vassilevska Williams, editors, {\em
  {STOC} '21: 53rd Annual {ACM} {SIGACT} Symposium on Theory of Computing,
  Virtual Event, Italy, June 21-25, 2021}, pages 682--693. {ACM}, 2021.

\bibitem{talg/HuangTWZ19}
Zhiyi Huang, Zhihao~Gavin Tang, Xiaowei Wu, and Yuhao Zhang.
\newblock Online vertex-weighted bipartite matching: Beating 1-1/e with random
  arrivals.
\newblock {\em ACM Transactions on Algorithms}, 15(3):1--15, 2019.

\bibitem{focs/HuangTWZ20}
Zhiyi Huang, Zhihao~Gavin Tang, Xiaowei Wu, and Yuhao Zhang.
\newblock Fully online matching {II:} beating ranking and water-filling.
\newblock In {\em {FOCS}}, pages 1380--1391. {IEEE}, 2020.

\bibitem{focs/HuangZZ20}
Zhiyi Huang, Qiankun Zhang, and Yuhao Zhang.
\newblock Adwords in a panorama.
\newblock In {\em {FOCS}}, pages 1416--1426. {IEEE}, 2020.

\bibitem{DBLP:conf/wine/JinW21}
Billy Jin and David~P. Williamson.
\newblock Improved analysis of {RANKING} for online vertex-weighted bipartite
  matching in the random order model.
\newblock In Michal Feldman, Hu~Fu, and Inbal Talgam{-}Cohen, editors, {\em Web
  and Internet Economics - 17th International Conference, {WINE} 2021, Potsdam,
  Germany, December 14-17, 2021, Proceedings}, volume 13112 of {\em Lecture
  Notes in Computer Science}, pages 207--225. Springer, 2021.

\bibitem{tcs/KalyanasundaramP00}
Bala Kalyanasundaram and Kirk Pruhs.
\newblock An optimal deterministic algorithm for online b-matching.
\newblock {\em Theoretical Computer Science}, 233(1-2):319--325, 2000.

\bibitem{DBLP:conf/soda/KapralovPV13}
Michael Kapralov, Ian Post, and Jan Vondr{\'{a}}k.
\newblock Online submodular welfare maximization: Greedy is optimal.
\newblock In Sanjeev Khanna, editor, {\em Proceedings of the Twenty-Fourth
  Annual {ACM-SIAM} Symposium on Discrete Algorithms, {SODA} 2013, New Orleans,
  Louisiana, USA, January 6-8, 2013}, pages 1216--1225. {SIAM}, 2013.

\bibitem{stoc/KarandeMT11}
Chinmay Karande, Aranyak Mehta, and Pushkar Tripathi.
\newblock Online bipartite matching with unknown distributions.
\newblock In {\em STOC}, pages 587--596, 2011.

\bibitem{stoc/KarpVV90}
Richard~M. Karp, Umesh~V. Vazirani, and Vijay~V. Vazirani.
\newblock An optimal algorithm for on-line bipartite matching.
\newblock In {\em STOC}, pages 352--358, 1990.

\bibitem{DBLP:journals/siamcomp/KorulaMZ18}
Nitish Korula, Vahab~S. Mirrokni, and Morteza Zadimoghaddam.
\newblock Online submodular welfare maximization: Greedy beats 1/2 in random
  order.
\newblock {\em {SIAM} J. Comput.}, 47(3):1056--1086, 2018.

\bibitem{stoc/MahdianY11}
Mohammad Mahdian and Qiqi Yan.
\newblock Online bipartite matching with random arrivals: an approach based on
  strongly factor-revealing {LP}s.
\newblock In {\em STOC}, pages 597--606, 2011.

\bibitem{jacm/MehtaSVV07}
Aranyak Mehta, Amin Saberi, Umesh~V. Vazirani, and Vijay~V. Vazirani.
\newblock Adwords and generalized online matching.
\newblock {\em J. {ACM}}, 54(5):22, 2007.

\bibitem{stoc/TangWZ20}
Zhihao~Gavin Tang, Xiaowei Wu, and Yuhao Zhang.
\newblock Towards a better understanding of randomized greedy matching.
\newblock In Konstantin Makarychev, Yury Makarychev, Madhur Tulsiani, Gautam
  Kamath, and Julia Chuzhoy, editors, {\em Proccedings of the 52nd Annual {ACM}
  {SIGACT} Symposium on Theory of Computing, {STOC} 2020, Chicago, IL, USA,
  June 22-26, 2020}, pages 1097--1110. {ACM}, 2020.

\bibitem{corr/Udwani21}
Rajan Udwani.
\newblock Adwords with unknown budgets.
\newblock {\em CoRR}, abs/2110.00504, 2021.

\bibitem{corr/Vazirani21}
Vijay~V. Vazirani.
\newblock Online bipartite matching and adwords.
\newblock {\em CoRR}, abs/2107.10777, 2021.

\end{thebibliography}

\newpage
\appendix

\section{Formal proof of \Cref{thm:rankingbalance}}
\subsection{Concentration Lemma}
\label{appendix:concentration}
\lemconcentration*
\begin{proof}
  We first prove that for any fixed $i$, if $\eps > 4 n^{-1/4}$, then
  \begin{equation}
    \label{eq:cct-2}
    \Pr_{x_1, \ldots, x_n} \Bk{y_i > \frac{i}{n + 1} + \frac{\eps}{2}} \le e^{-\sqrt n / 6}.
  \end{equation}
  Define $p = \frac{i}{n + 1} + \frac{\eps}{2}$. If $p \ge 1$, the event $y_i > p$ never happens, so \eqref{eq:cct-2} is trivially correct. Otherwise, let $z_j = [x_j \le p]$, the event $y_i > p$ is equivalent to $\sum_{j=1}^n z_j < i$. Notice that $z_j$ are i.i.d$.$ Bernoulli random variables, so we can apply Chernoff bound on it. Let
  \begin{align*}
    \mu     = \E_{x_1, \ldots, x_n}\Bk{\sum_{j=1}^n z_j} = pn, \quad \text{and} \quad
    \delta  = 1 - \frac{i}{pn},
  \end{align*}
  then the left side of \eqref{eq:cct-2} can be bounded by
  \begin{align*}
    \Pr_{x_1, \ldots, x_n}\Bk{y_i > \frac{i}{n+1} + \frac{\eps}{2}}
    = \Pr_{x_1, \ldots, x_n}\Bk{\sum_{j=1}^n z_j < i}
    = \Pr_{x_1, \ldots, x_n}\Bk{\sum_{j=1}^n z_j < (1-\delta)\mu}
    \le e^{-\delta^2\mu/3}.
  \end{align*}
  Furthermore, since
  \begin{align*}
    \delta^2\mu
    = \frac{(pn - i)^2}{pn}
    = \frac{\bk{\eps n / 2 - \frac{i}{n + 1}}^2}{\frac{ni}{n+1} + \eps n / 2}
    \ge \frac{\bk{2n^{3/4} - 1}^2}{n + n}
    \ge \frac{n^{3/2}}{2n}
    = \frac{\sqrt n}{2}
  \end{align*}
  (here we use the condition $4n^{-1/4} < \eps < 1$), we have
  \begin{align*}
    \Pr_{x_1, \ldots, x_n}\Bk{y_i > \frac{i}{n+1} + \frac{\eps}{2}}
    \le e^{-\delta^2\mu/3}
    \le e^{-\sqrt n / 6}, %
  \end{align*}
  as desired. By the same argument, we can also prove that
  \begin{equation*}
    \Pr_{x_1, \ldots, x_n} \Bk{y_i < \frac{i}{n + 1} - \frac{\eps}{2}} \le e^{-\sqrt n / 6}.
  \end{equation*}
  Hence for any $\eps > 4 n^{-1/4}$, we know that
  \[
  \Pr_{x_1, \ldots, x_n} \Bk{ \abs{y_i - \frac{i}{n + 1}} > \frac{\eps}{2}} \le 2e^{-\sqrt n / 6} .
  \]
  Moreover, since $\eps >  4n^{-1/4}$, we know $\eps > 2(n+1)^{-1}$, and therefore 
  \begin{align*}
    \Pr_{x_1, \ldots, x_n} \Bk{ \abs{y_i - \frac{i}{n}} > \eps} 
    \,\le\, & \Pr_{x_1, \ldots, x_n} \Bk{ \abs{y_i - \frac{i}{n + 1}} > \eps - \frac{i}{n(n+1)}} \\
    \,\le\, & \Pr_{x_1, \ldots, x_n} \Bk{ \abs{y_i - \frac{i}{n + 1}} > \frac{\eps}{2} } 
    \le 2e^{-\sqrt n / 6} .
  \end{align*}
  Take union bound over all $i$, we obtain
  \begin{equation*}
    \Pr_{x_1, \ldots, x_n} \Bk{\abs{y_i - \frac{i}{n}} \le \eps,\, \forall i\in [n]}  \ge 1 - 2ne^{-\sqrt n / 6}\quad \forall \eps \in (4n^{-1/4}, 1). \qedhere
  \end{equation*}
\end{proof}
\subsection{Reduction from Fractional to Random in Vertex-Weighted Scenario}
\label{appendix:reduction}
\newcommand{\bg}{{(\textup{begin})}} %
\newcommand{\ed}{{(\textup{end})}} %
\newcommand{\kt}{(t^*)} %
\newcommand{\kkt}{(t^{*})} %
\newcommand{\pbx}{x} %
\newcommand{\apbx}{z}

This subsection provides the proof of \Cref{thm:rankingbalance}. This lemma shows that for a fixed perturbation function $f$, \prank does not have a better competitive ratio than \pbalance in the vertex-weighted scenario. Therefore, an upper bound of the competitive ratio for \pbalance can also be used to bound the competitive ratio for \pbalance. Let us recall the formal description.

\thmrankingbalance*

To prove the lemma,  we fix an arbitrary instance $G=(L\cup R, E,w)$ and a perturbation function $f$, and we would like to construct another instance $G'$ so that 
\[
\Gamma_{\textsf{balance}}(G) \approx \Gamma_{\textsf{ranking}}(G'),
\]
where $\Gamma_{\textsf{balance}}(G)$ denotes the competitive ratio of \pbalance for $G$ and $\Gamma_{\textsf{ranking}}(G')$ denotes the competitive ratio of \prank for $G'$.

This is done in two steps.

The first step is to argue that the worst-case performance of \prank can be compared to a special algorithm class called: \apbalance. (We will formally define \apbalance later.)

The second step is to construct an instance $G'$ that
\[
\Gamma_{\textsf{balance}}(G) \approx \Gamma_{\textsf{approximate balance}}(G').
\]
Here, $\Gamma_{\textsf{approximate balance}}(G')$ is the competitive ratio of \apbalance for $G'$. It means the property holds for every algorithm in the class \apbalance. We proceed by first formally defining \apbalance.

\begin{definition}
  \label{def:apbalance}
  An algorithm $A$ is $\eps$-\apbalance with perturbation function $f$ if at the arrival time of an online vertex $v$, the algorithm $A$ continuously matches $v$ to an offline vertex $u$ in its neighborhood which satisfies the following condition:
  \begin{equation}
    \label{eq:awf-cond}
    w_u f(x_u - \eps) \ge w_a f(x_a + \eps), \quad \forall a \in N(v) \text{ with } x_a < 1.
  \end{equation}
  That is, $u$ is approximately the offline vertex with the highest perturbed weight. 
\end{definition}

Notice that when $\eps = 0$, this definition is exactly \pbalance. When $\eps>0$, \apbalance becomes a class of algorithms. Any algorithm who chooses offline vertex $u$ following the condition \eqref{eq:awf-cond} can be called an $\epsilon$-\apbalance. For rigorous issue, we also extend the definition of $f$ by $f(x) = +\infty,\,\forall x< 0$ and $f(x) = 0,\, \forall x > 1$ in case of $x_u < \eps$ or $x_a > 1 - \eps$.

\begin{lemma}
  \label{lemma:hw3lm1}
  Let $f$ be any fixed perturbation function. For any instance $G$ of fractional vertex-weighted online bipartite matching and $\delta > 0$, there exists $\eps > 0$ such that
  \begin{equation}
    \label{hw3lm1}
    \textup{APB}_{\eps, f}(G) \le \textup{PB}_{f}(G) + \delta \cdot \textup{OPT}(G),
  \end{equation}
  where $\textup{APB}_{\eps, f}(G)$ is the maximum possible gain of any $\eps$-\apbalance algorithm acting on $G$ with $f$, $\textup{PB}_{f}(G)$ is the gain of \pbalance on $G$ with $f$, and $\textup{OPT}(G)$ is the optimal gain in this instance. 
\end{lemma}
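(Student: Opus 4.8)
\textbf{Proof proposal for Lemma~\ref{lemma:hw3lm1}.}
The plan is to track, at every online arrival, how far the allocation produced by an arbitrary $\eps$-\apbalance algorithm can drift from the allocation produced by the exact \pbalance algorithm, and to show this drift is $O(\eps)$ uniformly over the (finite) instance $G$. Concretely, I would run the two algorithms ``in parallel'' on the same arrival sequence and, for each online vertex $v$, compare the two fractional allocations $\{x_{uv}^{\textup{APB}}\}_u$ and $\{x_{uv}^{\textup{PB}}\}_u$. The key structural observation is that \pbalance implements a continuous water-filling: at each instant it raises the portion $x_u$ of whichever available neighbor currently maximizes $w_u f(x_u)$, so the vector of ``perturbed levels'' $w_u f(x_u)$ among the saturated-so-far neighbors of $v$ stays essentially equalized. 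An $\eps$-\apbalance algorithm does the same water-filling but is allowed a slack of $\eps$ in each coordinate of $x_u$ before it is forced to switch. Hence after processing $v$, each offline vertex's portion can differ from its \pbalance value by at most $O(\eps)$, and — crucially — this error does not compound across arrivals, because \pbalance is monotone and ``memoryless'' in the sense that its future behavior depends only on the current portions $x_u$, and the map from current portions to future gain is Lipschitz in the $x_u$'s (with a constant depending only on $\max_{u}w_u$ and the number of vertices in $G$).

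The steps, in order, would be: (1) Fix $G$ with offline set $L$, $|L| = k$, and let $W = \max_{u\in L} w_u$. (2) Show by induction on the arrival index that for every offline vertex $u$, $|x_u^{\textup{APB}} - x_u^{\textup{PB}}| \le c(k)\cdot \eps$ after each arrival, where $c(k)$ is an explicit polynomial in $k$; the inductive step compares the two water-fillings during the single arrival of $v$, using condition~\eqref{eq:awf-cond} to bound how much the $\eps$-slack can let the approximate algorithm over- or under-fill relative to the exact threshold. Here one must be slightly careful about discontinuities of $f$ (it is only right-continuous), which is why the statement is phrased with an $\eps$ that we get to choose small: for the finitely many ``critical levels'' that $f$'s jumps and the vertex weights induce, we choose $\eps$ small enough to avoid spurious crossings. (3) Convert the portion-error bound into a gain bound: the total gain is $\sum_u w_u \min\{x_u,1\} = \sum_u w_u x_u$, so $|\textup{APB}_{\eps,f}(G) - \textup{PB}_f(G)| \le W\cdot k\cdot c(k)\cdot\eps$. (4) Finally, since $\textup{OPT}(G) \ge \max_u w_u \cdot (\text{something positive as long as } G \text{ is nontrivial})$ — or more simply, since $\textup{OPT}(G)$ is a fixed positive number for the fixed instance $G$ — choose $\eps$ small enough that $W k\, c(k)\,\eps \le \delta\cdot \textup{OPT}(G)$, which yields~\eqref{hw3lm1}.

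The main obstacle I anticipate is step (2): making rigorous the claim that the per-arrival discrepancy between an exact water-filling and an $\eps$-slack water-filling is $O(\eps)$ and, more importantly, that it does not accumulate. The subtlety is that a small perturbation in the portions at the \emph{start} of an arrival could in principle change \emph{which} neighbor gets matched, causing a macroscopic divergence; the resolution is that such a ``wrong'' choice can only be made while the two candidates' perturbed weights are within $O(\eps)$ of each other, so the amount of mass allocated before the algorithms re-synchronize is itself $O(\eps)$, times the number of vertices. One clean way to organize this is to define a potential $\Phi = \sum_u \int_0^{x_u} w_u f(t)\,dt$ (the ``primal objective of the water-filling'') and argue that both algorithms increase $\Phi$ at nearly the same rate; combined with the fact that this potential is strictly increasing in each $x_u$ with a controlled modulus (away from the jump points of $f$, handled as in step (2)), this pins down $\|x^{\textup{APB}} - x^{\textup{PB}}\|_\infty = O(\eps)$ without an inductive blow-up. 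Everything else — steps (1), (3), (4) — is bookkeeping with the fixed constants $k$, $W$, and $\textup{OPT}(G)$.
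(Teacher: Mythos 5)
Your overall strategy is the same as the paper's: simulate \pbalance and the $\eps$-\apbalance algorithm in parallel on the fixed instance, bound the drift $\lvert z_u - x_u\rvert$ between the two allocations arrival by arrival, convert that into a gain bound via $\sum_u w_u$, and finally choose $\eps$ small as a function of the fixed instance and $\delta$. Steps (1), (3), (4) are exactly the paper's bookkeeping.

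The genuine gap is in step (2), specifically your claim that the per-arrival discrepancy ``does not compound across arrivals'' and stays at $c(k)\cdot\eps$ for a polynomial $c(k)$ uniformly over the arrival sequence. This is not what happens, and your proposed justifications do not establish it. If the two allocations already differ by $\Delta$ at the start of an arrival, then the $\eps$-slack condition~\eqref{eq:awf-cond} is evaluated at the \emph{approximate} algorithm's portions, so a vertex can keep absorbing mass in the approximate run until its portion overshoots the exact run's final portion by roughly $\Delta+O(\eps)$ --- and summing the displaced mass over the $n$ offline vertices, the post-arrival discrepancy is governed by a recursion of the form $\Delta_{i+1}\le O(n^2)(\Delta_i+\eps)$ (the paper proves $\Delta_{i+1}\le 4n^2(\Delta_i+\eps)$), which compounds to $m(4n^2)^m\eps$ after $m$ arrivals. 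The ``memoryless/Lipschitz'' heuristic ignores that the Lipschitz constant per arrival exceeds $1$, and the potential $\Phi=\sum_u\int_0^{x_u}w_uf(t)\,\d t$ cannot pin down $\ell_\infty$ closeness: where $f$ is flat, two allocations with equal $\Phi$ can be far apart, and the greedy water-filling is not a global maximizer of $\Phi$ across arrivals anyway. None of this sinks the lemma --- since $G$ is fixed, an exponential-in-$m$ constant multiplying $\eps$ is still a constant, and your step (4) goes through with $\eps=\delta\cdot\textup{OPT}(G)/\bigl(m(4n^2)^m\sum_u w_u\bigr)$ --- but you must replace the non-compounding claim with the compounding recursion and carry out the induction (handling the right-continuity of $f$ as the paper does, at the cost of an extra $\eps$), rather than assert a uniform $O(\eps)$ bound.
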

\begin{proof}
  We first introduce some notations. Clearly we only need to consider the case that $\opt(G)>0$. Fix an arbitrary instance $G$ and perturbation function $f$, and use $\pbx_u(t)$ and $\apbx_u(t)$ to denote the matched portion of $u$ at time $t$ in \pbalance and \apbalance, respectively. 
  Our goal is to compare the performance of \pbalance and \apbalance on this instance $G$.

  Without loss of generality, we assume $f$ is left-continuous. Moreover, note that $\apbx_u$ and $\pbx_u$ are both continuous functions of $t$. We also assume at least $m$ offline vertices in $G$ with zero weight, fully connected with online vertices. Here $m = |R|$ is the number of online vertices. Under this assumption, every online vertex must be matched completely, so at any time $t$, we have 
  \begin{equation}
    \label{hw3lm1eq1}
    \sum_{u \in L} \apbx_u(t) = \sum_{u \in L} \pbx_u(t).
  \end{equation}
   Let $t$ be the time just before $v_i$'s arrival. We define
  \begin{equation*}
    \Delta_i = \max_{u \in L} (\apbx_u(t) - \pbx_u(t)).
  \end{equation*}
  Combining it with \eqref{hw3lm1eq1}, we immediately have
  \begin{equation}
    \label{hw3lm1eq3}
    \pbx_u(t) - \apbx_u(t) \le (n-1) \Delta_i. 
  \end{equation}

Notice that $\Delta_{m+1}$ shows the difference in the final performance between an \apbalance algorithm and \pbalance. Our next goal is to derive a recurrence relation between $\Delta_i$ and $\Delta_{i+1}$ and use this to bound $\Delta_{m+1}$.
We fix an arbitrary $i \le m$ and focus on the process of continuously matching an online vertex $v = v_i$. 
We use $\pbx_u\bg$ and $\apbx_u\bg$ to denote the matched portion of offline vertex $u$ just before $v$ arrives in \pbalance and \apbalance, respectively. We also use $\pbx_u\ed$ and $\apbx_u\ed$ to denote the matched portion just after $v$ is completely matched.

  We define a critical moment $t^*$. Consider the following set:
  \[
    T = \left\{ t : \text{At time }t\text{, \pbalance matches }v\text{ to a node }a\text{ with }\apbx_{a}(t^*) \ge \pbx_{a}\ed + \eps\right\}.
  \]
  Intuitively, we want to define $t^* = \inf T$ so as to bound $\apbx_u \ed - \pbx_u \ed $ by bounding both $\apbx_u \ed - \apbx_u \kt$ and $\apbx_u \kt - \pbx_u \ed$. But for technical reason, we need $t^* \in T$, hence we need to allow $t^* - \inf T$ to be a non-zero but sufficiently small number. Specifically, if $T$ is non-empty, we let $t^*$ be an element of $T$ such that at most $\eps$ portion of $v$ arrived in $[\inf T, t^*)$, otherwise we just let $t^*$ be the moment that $v$ is just complete matched.
  Then, we prove that for all $u$:
  \begin{equation}
    \label{eq:first_piece_bound}
    \apbx_u\kt - \pbx_u\ed \le \max\BK{2\eps, \Delta_i}, \quad \forall u \in L.
    \tag{P1}
  \end{equation}
  
  Obviously, for those $u$ such that $\apbx_u\kt = \apbx_u\bg$, we have
  \[\apbx_u\kt - \pbx_u\ed = \apbx_u\bg - \pbx_u\ed \le \apbx_u\bg - \pbx_u \bg \le \Delta_i.\]
  
  Otherwise, $u$ must have been chosen by $v$ in \apbalance before moment $t^*$. Let $t'$ be the last time before $t^*$ that $u$ is chosen, i.e., 
  $t' = \sup \left\{ t < t^*: u\text{ is chosen at moment } t \right\}$. 
  Then by the definition of $t^*$, $\apbx_{u}(t') \le \pbx_{u}\ed + 2\eps$ must hold since $\apbx_u(t') \le \apbx_u\kt \le \apbx_u(\inf T) + \eps$ and each moment $t<\inf T$ such that $u$ is chosen at moment $t$ must satisfy $\apbx_{u}(t) < \pbx_{u}\ed + \eps$. And thus \[\apbx_u\kt - \pbx_u\ed = \apbx_u(t') - \pbx_u\ed \le \eps.\]

  Thus far, we conclude \Cref{eq:first_piece_bound}. 
  Notice that when $T$ is empty, \eqref{eq:first_piece_bound} immediately implies a recurrence relation \[\Delta_{i+1} \le \max \{2\eps, \Delta_i\} \le 4n^2( \Delta_i + \eps).\] 
  Hence we only need consider the case that $T \neq \emptyset$, which means $t^* \in T$ by definition, and at time $t^{*}$, \pbalance matches $v$ to a node $a$ with $\apbx_{a}(t^{*}) \ge \pbx_{a}\ed + \eps$. In this case, we will upper bound $\apbx_u\ed - \apbx_u\kt$ as follows:
  \begin{equation}
    \label{eqn:second_piece_main}
    \apbx_u\ed - \apbx_u\kt \le n^2 \Delta_i + 2n \eps.
    \tag{P2}
  \end{equation}
  Suppose \Cref{eqn:second_piece_main} hold. Combining with \Cref{eq:first_piece_bound}, we will have
  \begin{equation*}
    \apbx_u\ed - \pbx_u \ed = (\apbx_u\ed - \apbx_u \kt) + (\apbx_u\kt - \pbx_u\ed) \le \max\BK{2\eps, \Delta_i} + n^2\Delta_i + 2n\eps\le 4n^2(\Delta_i + \eps), 
  \end{equation*}
  which means $\Delta_{i+1} \le 4n^2 (\Delta_i + \eps)$. Moreover, by the initial value $\Delta_1 = 0$, we can see
  \begin{equation*}
    \Delta_{m+1} \le m(4n^2)^m\eps.
  \end{equation*}
  by a simple mathematical induction, where $\Delta_{m+1}$ represents the maximum $\apbx_u - \pbx_u$ after all online vertices are matched. And finally,
  \[\textup{APB}_{\eps, f}(G) - \textup{PB}_{f}(G) = \sum_{u \in L} w_u \bk{\apbx_u - \pbx_u} \le \sum_{u \in L} w_u \Delta_{m+1} \le m(4n^2)^m\sum_{u \in L} w_u \eps.\]
  Let
  $\eps = (\delta \cdot \opt(G)) / \bk{ m(4n^2)^m\sum_{u \in L} w_u}$
  and we will get the desired statement.
  
  \subparagraph*{Proof of \Cref{eqn:second_piece_main}.}Finally, let us complete the proof of \Cref{eqn:second_piece_main}.
  At first, by \Cref{hw3lm1eq1}, we have 
  \begin{equation}
    \label{eq:idea_of_second_piece_bound}
    \apbx_u\ed - \apbx_u\kt \le \sum_{u'\in L} \bk{\apbx_{u'}\ed - \apbx_{u'}\kt} = \sum_{u'\in L} \bk{\pbx_{u'}\ed - \apbx_{u'}\kt}.
  \end{equation}
  It remains to show  
  \begin{equation}
    \label{hw3lm1eq4}
    \pbx_u\ed - \apbx_u\kt \le n\Delta_i + 2\eps, \quad \forall u \in L.
  \end{equation}

  We first rule out some simple cases. If $\apbx_u\kkt = 1$, i.e., at moment $t^{*}$, all budget of $u$ has been used up, then \eqref{hw3lm1eq4} is clear. So we only need to consider the case that $\apbx_u\kkt < 1$. In this case, we must have
  \begin{equation}
    \label{hw3case2eq1}
    w_{a} f(\apbx_{a}\kkt - \eps) \ge w_u f(\apbx_u\kkt + \eps),
  \end{equation}
  since at moment $t^{*}$, \apbalance matches $a$ to $v$.

  Moreover, we can also ruin out the case $\pbx_u \bg = \pbx_u \ed$, which means $u$ has never been chosen by $v$ in \pbalance, then by \eqref{hw3lm1eq3}, we already have \eqref{hw3case2eq1}.
  \[ \pbx_u\ed = \pbx_u\bg \le \apbx_u\bg + n\Delta_i \le \apbx_u\kkt + n\Delta_i.  \]
  
  So we only need to consider the case under $\pbx_u \bg < \pbx_u \ed$ and $z_u(t^*)<1$. In this case, $u$ must be chosen at some moment, and $a$ is never sutured. It is from the definition of $a$, we have $\pbx_a\ed\leq \apbx_a(t^*) - \eps \leq 1 - \eps$. In this case, the perturbed weight of $u$ is the largest at some moment. After that, by the balanced idea of \pbalance, it will keep being the largest among all unsaturated offline vertices. Therefore, we plan to prove the following observation:
  \begin{equation*}
    w_u f(\pbx_u \ed) \ge w_a f(\pbx_a \ed).
  \end{equation*}
  Combining this property with \eqref{hw3case2eq1}, we have 
  \[
    w_u f(\pbx_u\ed) \ge w_a f(\pbx_a\ed) \ge w_a f(\apbx_a\kkt - \eps) \ge w_u f(\apbx_u\kkt + \eps).
  \]
  By the monotony of $f$, we know $\pbx_u \ed \le \apbx_u\kkt + \eps$, and thus \eqref{hw3lm1eq4} has been verified. As previous discussion goes, plugging \eqref{hw3lm1eq4} into \eqref{eq:idea_of_second_piece_bound}, we can conclude \Cref{eqn:second_piece_main}.
  \begin{equation}
    \apbx_u\ed - \apbx_u\kt \le n^2 \Delta_i + n \eps \leq n^2 \Delta_i + 2n \eps.
    \tag{\ref{eqn:second_piece_main}}
  \end{equation}

  However, if $f$ is only right continuous, \pbalance may not be able to keep all perturbed weight the same at every moment. We present the following proof for the rigorous issue. We prove a slightly weaker observation
  \begin{equation*}
    w_u f(\pbx_u \ed - \eps) \ge w_a f(\pbx_a \ed).
  \end{equation*}
  We remark that it let us suffer another $\epsilon$ factor, which is why we have $2\epsilon$ in \eqref{hw3lm1eq4}.
  In particular, by $\pbx_u \bg < \pbx_u \ed$, we know $u$ has been chosen by \pbalance at some moment when $u$ has the highest perturbed weight among all available nodes. Hence we can define
  \[
    t' = \sup \BK{t : w_u f(\pbx_u(t)) \ge w_a f(\pbx_a(t))}.
  \]
  (Note that here we reuse the notation $t'$ to represent a new quantity.) From the definition of $t'$ we can derive that $\pbx_u(t') = \pbx_u \ed$ and $w_u f(\pbx_u(t') - \eps) \ge w_a f(\pbx_a(t'))$. The first one is because $a$ is never used up, and the gain of choosing $u$ is never more than the gain of choosing $a$ after moment $t'$, so $u$ is never been chosen. To see the second one, 
  by the definition of $t'$ we can see that there is a $t\in (t' - \eps , t']$ such that $w_u f(\pbx_u(t)) \ge w_a f(\pbx_a(t))$, 
  and hence we get $w_u f(\pbx_u(t') - \eps) \ge w_u f(\pbx_u(t)) \ge w_a f(\pbx_a(t)) \ge w_a f(\pbx_a(t))$. Combining these two properties of $t'$, we can justify our observation:
  \[
    w_u f(\pbx_u \ed - \eps) = w_u f(\pbx_u(t') - \eps) \ge w_a f(\pbx_a(t')) \ge w_a f(\pbx_a \ed).
  \]
  Combining this observation with \eqref{hw3case2eq1}, we have 
  \[
    w_u f(\pbx_u\ed - \eps) \ge w_a f(\pbx_a\ed) \ge w_a f(\apbx_a\kkt - \eps) \ge w_u f(\apbx_u\kkt + \eps).
  \]
  By the monotony of $f$, we prove $\pbx_u \ed \le \apbx_u\kkt + 2\eps$ as \eqref{hw3lm1eq4} and conclude \eqref{eqn:second_piece_main}.
\end{proof}

\begin{lemma}
  \label{lemma:hw3lm2}
  Let $f$ be any fixed perturbation function. For any instance $G$ of fractional vertex-weighted online bipartite matching and positive constants $\delta$ and $\eps$, there exists an instance $G'$ such that $\textup{OPT}(G') = \textup{OPT}(G)$ and
  \begin{equation}
    \label{hw3lm2}
    \textup{PR}_{f}(G') \le \textup{APB}_{\eps, f}(G) + \delta \cdot \textup{OPT}(G),
  \end{equation}
  where $\textup{PR}_f$ is the gain of \prank with perturbation function $f$, and $\textup{APB}_{\eps, f}(G)$ is the maximum possible gain of any $\eps$-\apbalance acting on $G$ with the same perturbation function $f$.
\end{lemma}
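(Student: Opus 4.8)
The plan is to realise the randomized integral algorithm \prank on a ``blown‑up'' copy $G'$ of $G$ as (the projection of) a \emph{continuous} $\eps$-\apbalance execution on $G$, so that the concentration of the ranks of the copies provides the $\delta\cdot\textup{OPT}(G)$ slack. Fix a large integer $N$ (chosen at the very end) and a parameter $\eps'\in(4N^{-1/4},\eps)$, say $\eps'=\eps/2$. Build $G'$ by replacing each offline $u\in L$ with $N$ copies $u^{(1)},\dots,u^{(N)}$ of weight $w_u/N$, replacing each online $v\in R$ with $N$ copies $v^{(1)},\dots,v^{(N)}$ that arrive consecutively and in the same relative order across the groups as the $v$'s in $G$, and putting a complete bipartite graph between $\{u^{(i)}\}_i$ and $\{v^{(j)}\}_j$ whenever $(u,v)\in E$. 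First I would check $\textup{OPT}(G')=\textup{OPT}(G)$: matching $u^{(i)}\leftrightarrow v^{(i)}$ for $i\le N$ along every edge of an optimal matching of $G$ attains $\textup{OPT}(G)$, and conversely any matching of $G'$ projects, on dividing copy–multiplicities by $N$, to a fractional vertex-weighted solution of $G$ of the same value, hence of value at most $\textup{OPT}(G)$ by integrality of the bipartite matching polytope.

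Next, let $\mathcal E$ be the event that for every $u\in L$ the sorted ranks of $\{u^{(i)}\}_i$ lie within $\eps'$ of the grid $\{\tfrac1N,\tfrac2N,\dots,1\}$ coordinatewise. By \Cref{cctlemma} (applicable since $\eps'>4N^{-1/4}$) and a union bound over the $|L|$ offline vertices, $\Pr[\overline{\mathcal E}]\le 2|L|\,N\,e^{-\sqrt N/6}$, which tends to $0$ as $N\to\infty$. On $\overline{\mathcal E}$ I would just use the trivial bound $\textup{gain of \prank on }G'\le\textup{OPT}(G')=\textup{OPT}(G)$.

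The core is to show that \emph{on $\mathcal E$} the (now deterministic) run of \prank on $G'$, projected to $G$ via $x_u\eqdef\tfrac1N\cdot\#\{\text{matched copies of }u\}$, is the execution of some $\eps$-\apbalance algorithm on $G$ with $\eps=\tfrac1N+\eps'$; granting this, the run's gain is $\le\textup{APB}_{\eps,f}(G)$ on $\mathcal E$, and combining with the previous bound and taking expectations over the ranks gives $\textup{PR}_f(G')\le\textup{APB}_{\eps,f}(G)+\Pr[\overline{\mathcal E}]\cdot\textup{OPT}(G)$. To establish it: whenever \prank matches a copy of the current online vertex, assume (without loss, since copies of one vertex are tie‑broken by rank) it picks the lowest‑ranked available copy of the chosen offline vertex, so the matched copies of each $u$ are always the $k_u$ copies of smallest rank; if $x_u=k_u/N$ at that moment, the copy in question is the $(k_u{+}1)$-st order statistic, whose rank lies in $[x_u+\tfrac1N-\eps',\,x_u+\tfrac1N+\eps']$ under $\mathcal E$, and likewise for the lowest‑ranked available copy of any competitor $a$ with $x_a<1$. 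From the fact that \prank matched a copy of $u$ rather than any such $a$-copy, namely $w_uf(\text{rank of the }u\text{-copy})\ge w_af(\text{rank of the }a\text{-copy})$, together with these rank bounds and the monotonicity of $f$, one gets $w_uf(x_u-\eps)\ge w_af(x_a+\eps)$; since $x_u$ only increases, by at most $\tfrac1N$, while this online vertex is being matched, the inequality persists throughout, which is exactly condition~\eqref{eq:awf-cond} (the degenerate cases $x_u<\eps$ and $x_a+\eps>1$ being absorbed by the conventions $f(\cdot)=+\infty$ resp.\ $0$). Finally I would pick $N$ large enough that simultaneously $\tfrac1N+\eps'\le\eps$, $\eps'>4N^{-1/4}$, and $2|L|Ne^{-\sqrt N/6}\le\delta$ — all holding for $N$ large with $\eps'=\eps/2$ — to conclude \eqref{hw3lm2}.

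The step I expect to be the main obstacle is this core correspondence: rigorously verifying that \emph{every} deterministic trajectory of integral \prank on $G'$ compatible with $\mathcal E$ satisfies \eqref{eq:awf-cond} at \emph{every} instant of the matching process, including the intra‑step drift of $x_u$, offline vertices that saturate mid‑process, and tie‑breaking among equal‑perturbed‑weight copies of a single offline vertex. The concentration estimate, the union and trivial bounds, and the closing parameter bookkeeping are routine.
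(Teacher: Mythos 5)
Your proposal is correct and follows essentially the same route as the paper's proof: the identical blow-up construction of $G'$ with $N$ copies of weight $w_u/N$, the concentration lemma plus a union bound to place the sorted ranks near the grid $\{i/N\}$, the trivial $\textup{OPT}$ bound on the bad event, and the reinterpretation of \prank on $G'$ as a fractional process on $G$ satisfying the $\eps$-\apbalance condition (with the same $1/N$ intra-step drift accounting). The only cosmetic difference is that you track the failure probability explicitly as $2|L|Ne^{-\sqrt N/6}$ where the paper simply takes $N$ large enough to make it at most $\delta$.
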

\begin{proof}
  We construct $G'$ in the following way. Say $G = (L \cup R, E)$ and $G' = (L' \cup R', E')$. Let $N$ be a large positive integer. For every offline vertex $u \in L$, construct $N$ offline vertices $u^{(1)}, u^{(2)}, \ldots, u^{(N)}$ in $L'$, with weight $w_{u^{(i)}} = w_u/N$. For every online vertex $v \in R$, construct $N$ online vertices $v^{(1)}, \ldots, v^{(N)}$ in $R'$. They arrives consecutively at the same time when $v$ arrives in $G$. For each edge $(u, v) \in E$, we connect all vertex pairs $(u^{(i)}, v^{(j)})$ for $i, j \in [N]$. It is clear that $\textup{OPT}(G) = \textup{OPT}(G')$, and we simply write it as $\opt$ for short.

  At the beginning of the \prank algorithm, we assign a rank $y_{u^{(i)}}$ to every vertex $u^{(i)} \in L'$. Without loss of generality, we assume $y_{u^{(1)}} \le y_{u^{(2)}} \le \cdots \le y_{u^{(N)}}$. By  \Cref{cctlemma}, there exists a sufficiently large number $N$ such that the probability of the following event is at least $1-\delta$.
  \begin{equation}
    \label{hw3lm2cond}
    \abs{y_{u^{(i)}} - \frac{i}{N}} \le \frac{\eps}{2}, \quad \forall u \in L \text{ and } i \in [N].
  \end{equation}
  
  Let us choose a sufficient large $N>2/\eps$, with the probability at least $1-\delta$, we have \Cref{hw3lm2cond} and we will prove  $\textup{PR}_f(G') \le \textup{APB}_{\eps, f}(G)$. On the other hand, if \Cref{hw3lm2cond} fails, we use $\opt$ to upper bound $\textup{PR}_f(G')$. To sum up, we will have
  \begin{equation*}
    \textup{PR}_f(G') \le (1-\delta) \textup{APB}_{\eps, f}(G) + \delta \cdot \textup{OPT} \le \textup{APB}_{\eps, f}(G) + \delta \cdot \textup{OPT}.
  \end{equation*}

  It remains to prove $\textup{PR}_f(G') \le \textup{APB}_{\eps, f}(G)$ when we have \Cref{hw3lm2cond}. We regard \prank on $G'$ as a process of fractional matching on $G$: when \prank matches $v^{(i)}$ to $u^{(j)}$, obtaining a gain of $w_{u} / N$, it is equivalent to matching $1/N$ portion of $v$ to $u$, obtaining the same gain. \prank algorithm can be restated under this view:
  \begin{itemize}
    \item For each $u\in L$, assign ranks $y_{u^{(1)}} \le y_{u^{(2)}} \le \cdots \le y_{u^{(N)}}$.
    \item When an online vertex $v$ arrives, make $N$ decisions. On each decision, choose the $u$ with largest $w_u f(y_{u^{\bk{x_u N + 1}}})$, match $v$ to $u$ for $1/N$ portion.
  \end{itemize}
  Suppose \eqref{hw3lm2cond} holds. On each decision moment of an online node $v$, if \prank algorithm chooses $a \in L$ and there is $x_a$ portion of the budget of $a$ has been used, then we know it is the $x_a N$-th copy of $a$ that is being matched (note that $x_aN$ is an integer at the moment when $v$ is just about to arrive), and hence for any other choice $u\in N(v)$ with used budget $x_u < 1$, we have 
  \begin{equation*}
    w_a f(x_a - \eps / 2) \ge w_a f(y_{a^{\bk{x_a N}}}) \ge w_u f(y_{u^{\bk{x_u N}}}) \ge w_u f(x_u + \eps / 2).
  \end{equation*}
  Here we use the fact that $y_{a^{\bk{x_a N}}} \ge x_a - \eps/2$ and $y_{u^{\bk{x_u N}}} \le x_u + \eps / 2$ by \eqref{hw3lm2cond}.
  After a decision is made, the next $1/N$ portion is matched to the chosen vertex. At any time in the next $1/N$ portion, we have
  \begin{equation*}
    w_a f(x_a - \eps / 2 - 1/N) \ge w_u f(x_u + \eps / 2), \quad \forall u \in N(v) \text{ with } x_u < 1.
  \end{equation*}
  Since $1/N < \eps / 2$, we know the condition of $\eps$-\apbalance $w_a f(x_a - \eps) \ge w_u f(x_u + \eps)$ is satisfied, so the gain of \prank is bounded by the best possible gain of \apbalance, i.e., $\textup{PR}_f(G') \le \textup{APB}_{\eps, f}(G)$.
\end{proof}

Finally, combining \Cref{lemma:hw3lm1} and \Cref{lemma:hw3lm2}, it is ready to prove \Cref{thm:rankingbalance}.

\begin{proof}[Proof of \Cref{thm:rankingbalance}]
  For any $\delta > 0$, we prove that $\Gamma_{\textsf{ranking}} \le \Gamma_{\textsf{balance}} + \delta$. First, there exists an instance $G$ such that $\text{PB}_f(G) \le (\Gamma_{\textsf{balance}} + \delta/3) \textup{OPT}(G)$, where $\text{PB}$ is the gain of \pbalance. 
  
  By \Cref{lemma:hw3lm1}, we know there exists some $\eps > 0$ such that $\text{APB}_{\eps, f}(G) \le \text{PB}_f(G) + (\delta/3) \text{OPT}(G) \le (\Gamma_{\textsf{balance}} + 2\delta / 3) \textup{OPT}(G)$. By \Cref{lemma:hw3lm2}, we know there exists some instance $G'$ with the same optimal gain, such that $\text{PR}_f(G') \le \text{APB}_{\eps, f}(G) + (\delta / 3)\text{OPT}(G) \le (\Gamma_{\textsf{balance}} + \delta) \text{OPT}(G')$.
  
  This shows that $\Gamma_{\textsf{ranking}} \le \Gamma_{\textsf{balance}} + \delta$. Let $\delta \to 0$, we have $\Gamma_{\textsf{ranking}} \le \Gamma_{\textsf{balance}}$.
\end{proof}

\section{Proof of \Cref{lem:mathematical_fact}}
\label{appendix:mathematical}
We prove the following mathematical fact.
\lemratio*

\begin{proof}
  We prove the lemma by contradiction, and we assume $\Gamma = 1 - 1/e - \gamma$, where $\gamma = 0.0003$.

  Let $r$ be the root of $1 - e^{r}(1 - \Gamma)$, i.e., $r = -\ln (1 - \gamma) \approx 0.999185$.
  We choose parameters $\delta = 0.05$ and $\beta^* \approx 0.009615$, where $\beta^*$ is the unique root of
  \[
  g(\beta) = (\delta + e\gamma + \delta e \gamma) (e^{\beta-1} - e^{-1}) + \gamma - \beta\delta.
  \]

  We can also see that $\beta^* < \Gamma$.

  Now for any $\alpha \in (0,r)$, we choose an $\beta$ to simplify \eqref{ljxeq2}. Since $\alpha < r$, we can ensure that $1 - e^\alpha(1 - \Gamma)>0$. Let
  \begin{equation*}
    \beta = \inf \left\{ x \in [0,\alpha] : \frac{f(x)}{1 - e^x(1 - \Gamma)} \le (1 + \delta) \frac{f(\alpha)}{1 - e^\alpha(1 - \Gamma)} \right\}.
  \end{equation*}
  The existence of $\beta$ can be seen by noticing $\alpha$ is an element of the set in the right hand side. 
  
  The choice of $\beta$ leads to two important properties:
  \begin{gather}
    \label{eqn:calculation_first_pro}
    \frac{f(\alpha)}{f(\beta)} \ge \frac{1 - e^\alpha(1 - \Gamma)}{(1 + \delta)\Gamma},
    \tag{P1}\\
    \label{eqn:calculation_second_pro}
    \beta\le \beta^*.
    \tag{P2}
  \end{gather}
  \eqref{eqn:calculation_first_pro} comes from the definition of $\beta$. By the definition, for any $\epsilon >0$, there exists $x\in (\beta, \beta + \epsilon)$ such that
  \[
    \frac{f(x)}{1 - e^x(1 - \Gamma)} \le (1 + \delta) \frac{f(\alpha)}{1 - e^\alpha(1 - \Gamma)}.
  \]
  Since $f$ is right continuous, let $\epsilon \to 0$ and we get
  \[
    \frac{f(\beta)}{1 - e^\beta(1 - \Gamma)} \le (1 + \delta) \frac{f(\alpha)}{1 - e^\alpha(1 - \Gamma)},
  \]
  which means that
  \[
    \frac{f(\alpha)}{f(\beta)} \ge \frac{1 - e^\alpha(1 - \Gamma)}{(1 + \delta)(1 - e^\beta(1 - \Gamma))} \ge \frac{1 - e^\alpha(1 - \Gamma)}{(1 + \delta)\Gamma}.
  \]

  Next, we prove \eqref{eqn:calculation_second_pro}. Similar to \Cref{thm:vertexweight}, we split \eqref{ljxeq1} into two equations, and due to the homogeneity, we can assume $M = 1$, i.e.,
  \begin{gather}
    f(\alpha) \ge 1 - e^{\alpha} (1 - \Gamma), \label{ljxeq3} \\
    \int_{0}^{\beta} f(x) \dx \le  \beta+1-e^{\beta-1} - \Gamma. \label{ljxeq4}
  \end{gather}
  By the definition of $\beta$ and \eqref{ljxeq3}, we know for any $x\in (0,\beta)$,
  \[
    f(x) \ge (1 + \delta) \frac{f(\alpha)}{1 - e^\alpha(1 - \Gamma)} (1 - e^x(1 - \Gamma)) \ge (1 + \delta)(1 - e^x(1 - \Gamma)),
  \]
  hence
  \[
    \int_{0}^{\beta} f(x) \dx \ge (1 + \delta) \int_0^\beta (1 - e^x(1 - \Gamma)) \dx = (1+\delta)(\beta - (e^\beta - 1)(1 - \Gamma)).
  \]
  Comparing this lower bound to \eqref{ljxeq4}, we get
  \begin{equation*}
    \beta+1-e^{\beta-1} - \Gamma \ge (1+\delta)(\beta - (e^\beta - 1)(1 - \Gamma)).
  \end{equation*}
  After substituting $\gamma = 1 - 1/e - \Gamma$ into it and simplifying, we get
  \begin{equation}
    (\delta + e\gamma + \delta e \gamma) (e^{\beta-1} - e^{-1}) + \gamma - \beta\delta \ge 0. \label{ljxeq5}
  \end{equation}
  We regard the left hand side of \eqref{ljxeq5} to be a function of $\beta$ and denote it by $g(\beta)$. Then clearly
  \[
    g''(\beta) = (\delta + e\gamma + \delta e \gamma) e^{\beta-1} \ge 0,
  \]
  and hence $g$ is a convex function. Moreover, since $g(0) = \gamma > 0$ and $g(1) = (\delta + e\gamma + \delta e \gamma) (1 - e^{-1}) + \gamma - \delta \approx -0.017553 < 0$, $g$ has a unique root over $[0,1]$, which is just $\beta^*$  we have defined. Hence by \eqref{ljxeq5} we can get $\beta \le \beta^*$.

  By using the two important properties \eqref{eqn:calculation_first_pro} and \eqref{eqn:calculation_second_pro}, \Cref{ljxeq2} becomes
  \begin{align*}
    (1-\Gamma) f(\alpha) & \ge (\Gamma - \alpha)  \int_0^\alpha f(x) \dx + (\Gamma - \beta) \int_{\alpha}^1 \min \left\{f(x),\frac{f(\alpha)}{f(\beta)}f(x-\alpha)\right\} \dx                                         \\
                         & \, \ge (\Gamma - \alpha)  \int_0^\alpha f(x) \dx + (\Gamma - \beta^*) \int_{\alpha}^{1} \min \left\{f(x),\frac{1 - e^\alpha(1 - \Gamma)}{(1 + \delta)\Gamma}f(x-\alpha)\right\} \dx         \\
                         & \, \ge (\Gamma - \alpha)  \int_0^\alpha (1 - e^{x} (1 - \Gamma)) \dx                                                                                                                        \\
                         &\quad\   + (\Gamma - \beta^*) \int_{\alpha}^{r} \min \left\{(1 - e^{x} (1 - \Gamma)),\frac{1 - e^\alpha(1 - \Gamma)}{(1 + \delta)\Gamma}(1 - e^{x - \alpha} (1 - \Gamma))\right\} \dx.
  \end{align*}
  Here the first inequality is just \eqref{ljxeq2}, the second inequality is by the property of $\beta$, and in the third inequality, we just throw a part of second integral (in range $(r,1)$) away and use \eqref{ljxeq3}. Now the final lower bound of $(1 - \Gamma)f(\alpha)$ is a function of $\alpha$ which is independent to $f$, and we denote it by $(1-\Gamma)h(\alpha)$. So we get
  \[
    f(\alpha) \ge h(\alpha), \quad \forall \alpha \in (0,r).
  \]
  Combining it with \eqref{ljxeq3}, we get
  \[
    f(\alpha) \ge \max\{h(\alpha), 1 - e^{\alpha} (1 - \Gamma)\}, \quad \forall \alpha \in (0,r).
  \]
  We integrate it from $0$ to $r$ numerically.
  \footnote{The code is available at 
  \url{https://github.com/orbitingflea/perturbation-function}.
  }
  \[
    \int_0^r f(x) \dx \ge \int_0^r \max\{h(\alpha), 1 - e^{\alpha} (1 - \Gamma)\} \approx 0.368282 > 0.3682.
  \]
  There is a contradiction with \eqref{ljxeq4}, because 
  \[
    \int_0^r f(x) \dx \le  r+1-e^{r-1} - \Gamma = 0.368179 < 0.3682
  \]
  Hence we must have $\Gamma < 1 - 1/e - 0.0003$.
\end{proof}

\end{document}